\newtheorem{theorem}{Theorem}[section]
\newcommand*\sxto[1]{\xrightarrow{\smash{#1}}}
\newcommand{\ket}[1]{\ensuremath{| #1 \rangle}}
\newcommand{\bra}[1]{\ensuremath{\langle #1 |}}
\newcommand\xmapsto[1]{\mathrel{\ensuremath{\raisebox{1.5pt}{\ensuremath{{\scriptscriptstyle |}}}\ensuremath{\hspace{-5.5pt}\sxto{#1}}}}}
\newcommand\ignore[1]{}
\newcommand\question[1]{

\vspace{5pt}
\noindent
\emph{#1}

\vspace{5pt}}
\tikzset{smallbox/.style={draw, fill=white, minimum height=0.45cm, minimum width=0.45cm, inner sep=-100pt}}
\tikzset{smallcircle/.style={draw, circle, fill=black, inner sep=1.5pt}}
\tikzset{circlelabel/.style={draw, circle, inner sep=-5pt, font=\scriptsize, fill=white, minimum width=12pt}}
\tikzset{smallfont/.style={font=\scriptsize}}
\tikzset{sepstyle/.style={thin, dashed}}
\def\calign@preamble{%
   &\hfil\strut@
    \setboxz@h{\@lign$\m@th\displaystyle{##}$}%
    \ifmeasuring@\savefieldlength@\fi
    \set@field
    \hfil
    \tabskip\alignsep@
}
\let\cmeasure@\measure@
\patchcmd\cmeasure@{\divide\@tempcntb\tw@}{}{}{}
\patchcmd\cmeasure@{\divide\@tempcntb\tw@}{}{}{}
\patchcmd\cmeasure@{\ifodd\maxfields@
  \global\advance\maxfields@\@ne
  \fi}{}{}{}    
\newenvironment{calign}
{%
  \let\align@preamble\calign@preamble
  \let\measure@\cmeasure@
  \align
}
{%
  \endalign
}  
\tikzset{keymwidth/.initial=5cm}
\tikzset{mwidth/.style={keymwidth={#1}}}
\tikzset{keymheight/.initial=5cm}
\tikzset{mheight/.style={keymheight={#1}}}
        \let\savedwidth\pgf@rectc
        \let\savedheight\pgf@rectc
\newcommand\Z{\mathbb{Z}}
\newcommand\C{\ensuremath{\mathbb{C}}}
\def\swangle{-145}
\def\seangle{-35}
\def\nwangle{145}
\def\neangle{35}
\def\id{\mathrm{id}}
\def\GL{\textit{GL}}
\def\Mat{\mathrm{Mat}}
\def\Tr{\mathrm{Tr}}
\def\dim{\mathrm{d}}
\def\licsscale{0.7}
\begin{document}

\title{Topological Structure of Quantum Algorithms}

\title{The Topology of Quantum Algorithms}
\author{Jamie Vicary
\\
\texttt{jamie.vicary@cs.ox.ac.uk}
\\[10pt]
\begin{tabular}{c}
Centre for Quantum Technologies, University of Singapore
\\
and Department of Computer Science, University of Oxford
\\[0pt]
\end{tabular}}
\date{October 10, 2013}
\maketitle

\begin{abstract}
\boldmath
We use a categorical topological semantics to examine the Deutsch-Jozsa, hidden subgroup and single-shot Grover algorithms. This reveals important structures hidden by conventional algebraic presentations, and allows novel  proofs of correctness via local topological operations, giving for the first time a satisfying high-level explanation for why these procedures work. We also investigate generalizations of these algorithms,  providing improved analyses of those already in the literature, and a new generalization of the single-shot Grover algorithm.
\end{abstract}

\section{Overview}

\subsection{Introduction}

\noindent
Important quantum procedures often seem mysterious because of the low-level way in which they are presented. A direct description of the required state preparations, unitary operators and projective measurements in terms of matrices of complex numbers gives exactly the required information to actually implement a particular protocol --- but almost no information about why it should work.

One reason for this is that quantum information has a \textit{topological} nature. The overall effect of a composite of quantum operations depends not so much on the order of composition, as on the topological flows of information that this induces, as demonstrated in striking fashion by Abramsky, Coecke et al~\cite{ac04-csqp, ac08-cqm, c03-loe, cd11-iqo, cp06-pnwt, cpp09-cqs, cp06-qmws, cpv08-dfb, cp08-ecc, p10-gaqc, v12-hsqp}, in a body of work that comprises the  Categorical Quantum Mechanics research programme. Their topological semantics de-emphasize the matrices of complex numbers used in conventional presentations of quantum procedures, replacing them with geometrical primitives that give a satisfying explanation for why these procedures work. This provides  a mature set of tools for analyzing a wide range of quantum procedures, especially those which make use of Bell-type entanglement or complementary observables.

However, this research programme has not yet provided a topological analysis of the most important quantum \textit{algorithms} (although see~\cite{p10-gaqc} for interesting work along these lines.) Quantum algorithms can be usefully distinguished from quantum \emph{procedures} such as quantum teleportation, dense coding or key exchange, which bring about a certain physical effect rather than compute a particular quantity. Features of quantum  algorithms that make them hard to model using the topological formalism include the use of an oracle and a reliance on group representation theory.

This paper overcomes these difficulties, providing a topological account of the Deutsch-Jozsa, hidden subgroup and single-shot Grover algorithms. This gives  a consistent, high-level structural account of why these important quantum algorithms work, directly exposing the relevant flows of quantum information, and allowing new proofs of correctness.

This new perspective makes it much easier to consider generalizations of these algorithms. For the single-shot Grover algorithm, this generalization appears to be new. While the traditional Grover algorithm is based on the group $\mathbb{Z}_2$, our generalization is based on an arbitrary finite group $G$. Given a set coloured by the elements of $G$ in a particular ratio, the algorithm identifies with a single query an element with an infrequent colour. To give a concrete example, consider a basket of red, blue and green balls, promised to be coloured in the ratio 4:1:1 in some order. For $G=\Z_3$ the generalized Grover algorithm will identify one of the less frequently--occurring colours with a single query, a task that cannot be achieved with the ordinary Grover algorithm.

The literature already contains descriptions of generalized  Deutsch-Jozsa~\cite{h99-coqa, bdb06-etp} and hidden subgroup~\cite{hrt00-nsr} algorithms. As with the original algorithms, these generalizations are technically opaque, and require long proofs which provide relatively little insight. We obtain new descriptions of these generalized algorithms using our topological techniques, giving a clear view of their structure, and simpler proofs of correctness. For the Deutsch-Jozsa algorithm, this clarity allows us to produce a new generalization that goes beyond that given in the literature.

The topological forms of the Deutsch-Jozsa, hidden subgroup and single-shot Grover algorithms that we develop are summarized by the following diagrams:
{\def\licsscale{0.5}
\begin{calign}
\nonumber
\begin{aligned}
\begin{tikzpicture}[thick, scale=\licsscale]
\begin{pgfonlayer}{foreground}
    \node (dot) [smallcircle] at (0,1) {};
    \node (f) [smallbox, anchor=south, thick] at (0.7,2) {$f$};
    \node (m) [circlelabel, thick] at ([xshift=0.7cm, yshift=1cm] f.north) {$m$};
\end{pgfonlayer}
\draw (0,-0.25)
        node [smallcircle] (bdot) {}
    to (0,1)
    to [out=\nwangle, in=south] (-0.7,2)
    to ([yshift=1.2cm] m.center -| -0.7,1)
        node (top) [smallcircle] {};
\draw (0,1)
    to [out=\neangle, in=south] (f.south)
    to (f.north)
    to [out=up, in=\swangle] +(0.7,1)
    to [out=\seangle, in=up] +(0.7,-1)
    to (2.1,-0.25)
        node (sigmadag) [smallbox] {$\sigma ^\dag$};
\draw (m.center) to (1.4,5.95)
        node [above] {$\{0,1\}$};
\node [anchor=east, smallfont, inner sep=0pt] at (bdot.west) {$\displaystyle \frac{1}{\sqrt{|S|}}$};
\node [anchor=east, smallfont, inner sep=0pt] at (top.west) {\makebox[0pt][r]{$\displaystyle \frac{1}{\sqrt{|S|}}$}};
\node [anchor=east, smallfont, inner sep=0pt] at (sigmadag.west) {$\displaystyle \frac{1}{\sqrt{2}}$};
\end{tikzpicture}
\end{aligned}
&
\begin{aligned}
\begin{tikzpicture}[thick, scale=\licsscale]
\begin{pgfonlayer}{foreground}
    \node (dot) [smallcircle] at (0,1) {};
    \node (f) [smallbox, anchor=south, thick] at (0.7,2) {$f$};
    \node (m) [circlelabel, thick] at ([xshift=0.7cm, yshift=1cm] f.north) {$m$};
\end{pgfonlayer}
\draw (0,-0.25)
        node [smallcircle] (bdot) {}
    to (0,1)
    to [out=\nwangle, in=south] (-0.7,2)
    to ([yshift=1.2cm] m.center -| -0.7,1)
        node (rho) [smallbox] {$s ^\dag$};
\draw (0,1)
    to [out=\neangle, in=south] (f.south)
    to (f.north)
    to [out=up, in=\swangle] +(0.7,1)
    to [out=\seangle, in=up] +(0.7,-1)
    to (2.1,-0.25)
        node [smallbox] (sigmadag) {$\sigma ^\dag$};
\draw (m.center) to (1.4,5.95)
        node [above] {$\{0,1\}$};
\node [anchor=east, smallfont, inner sep=0pt] at (bdot.west) {$\displaystyle \frac{1}{\sqrt{|S|}}$};
\node [anchor=east, smallfont, inner sep=0pt] at (sigmadag.west) {$\displaystyle \frac{1}{\sqrt{2}}$};
\node [smallbox] at (-0.7,3.5) {$D$};
\end{tikzpicture}
\end{aligned}
&
\begin{aligned}
\begin{tikzpicture}[thick, scale=\licsscale]
\begin{pgfonlayer}{foreground}
    \node (dot) [smallcircle] at (0,1) {};
    \node (f) [smallbox, anchor=south, thick] at (0.7,2) {$f$};
    \node (m) [circlelabel, thick] at ([xshift=0.7cm, yshift=1cm] f.north) {$m$};
\end{pgfonlayer}
\draw (0,-0.25)
        node [smallcircle] (bdot) {}
    to (0,1)
    to [out=\nwangle, in=south] (-0.7,2)
    to ([yshift=1.2cm] m.center -| -0.7,0.8)
        node (rho) [smallbox] {$\rho$};
\node [above] at (-0.7,5.95) {$\Mat(n)$};
\draw (0,1)
    to [out=\neangle, in=south] (f.south)
    to (f.north)
    to [out=up, in=\swangle] +(0.7,1)
    to [out=\seangle, in=up] +(0.7,-1)
    to (2.1,-0.25)
        node [circlelabel] {$e$};
\draw (m.center) to (1.4,5.95)
        node [above] {$S\vphantom($};
\node [anchor=east, smallfont, inner sep=0pt] at (bdot.west) {$\displaystyle \frac{1}{\sqrt{|S|}}$};
\node at (rho.west) [anchor=east, smallfont, inner sep=2pt] {\makebox[0pt][r]{$\displaystyle \sqrt{\frac{n}{|G|}}$}};
\draw (rho.60) to (rho.60 |- 0,5.95);
\draw (rho.120) to (rho.120 |- 0,5.95);
\end{tikzpicture}
\end{aligned}
\\
\nonumber
\text{\textbf{Deutsch-Jozsa}}
&
\text{\textbf{Single-shot Grover}}
&
\text{\textbf{Hidden subgroup}}
\end{calign}
}Each diagram represents an entire protocol, including the state preparation, unitary dynamics and measurement stages. We use the word `topological' to describe our approach since it is based on an existing topological notation for the linear algebra of finite groups and sets, which is summarized for easy reference in Appendices~A and B. We explore the three  algorithms in Sections~\ref{sec:newdeutschjozsa}, \ref{sec:grover} and \ref{sec:hiddensubgroup} respectively.

\subsubsection*{Acknowledgements}

This project grew out of discussions with Bob Coecke. I am grateful to John Baez, Bob Coecke, Brendan Fong and William Zeng for useful comments. Diagrams have been produced using the package \textit{TikZ}.

\subsection{Shortcomings of the circuit notation}
\label{sec:olddeutschjozsa}

We begin by outlining the usual circuit-based presentation of the Deutsch-Jozsa algorithm, as can be found in standard reference texts such as~\cite{nc-qcqi}. Presented in this manner the algorithm seems an idiosyncratic collection of individual parts, which come together to solve the Deutsch-Jozsa problem in an apparent minor miracle. In contrast, the new topological presentation we develop in Section~\ref{sec:newdeutschjozsa} shows the consistent structure that lies hidden beneath each part of the algorithm. Not all descriptions of the Deutsch-Jozsa algorithm have the shortcomings we describe here, but all of them are commonly encountered.

The conventional Deutsch-Jozsa algorithm involves a function
\begin{equation}
\{ 0,1 \} ^N \sxto f \{0,1\},
\end{equation}
where $\{ 0, 1 \}$ is the group of integers under addition modulo 2, and $\{0,1\}^N$ is the $N$\-fold cartesian product of this group for some natural number $N$. This function is promised to have one of two properties: it is either \emph{constant}, meaning that it takes the same value on every element of $S$; or \emph{balanced}, meaning  that it takes each possible value on exactly half of the elements of $S$. This gives us a first question:

\question{Why should ``constant'' or ``balanced'' be important properties?}

We then construct a unitary operator
\begin{equation}
(\C^2) ^N \otimes \C^2 \sxto{U_f} (\C^2) ^N \otimes \C^2
\end{equation}
from our function $f$, which is defined to act in the following way:
\begin{equation}
\label{eq:ufaction}
\ket {k}\otimes \ket b \xmapsto{U_f} \ket k \otimes \ket{b \oplus f(k)}
\end{equation}
Here $k$ is an element of $\{0,1\}^N$, and $b$ is an element of $\{0,1\}$. The symbol $\oplus$ represents addition modulo 2, which is the group structure on $\{0,1\}$. This poses another question:

\question{Does the group structure on $\{0,1\}^N$ play an essential role?}
\label{question:whatrolegroupstructure}

\noindent
It has not been relevant so far, in the definitions of constant or balanced function or in the definition of $U_f$.

We then build the following quantum circuit:
\begin{equation}
\label{eq:traditionaldeutschjozsacircuit}
\begin{aligned}
\begin{tikzpicture}[scale=0.6]
\draw [thick] (-2,0.75)
    node [anchor=east] {$\ket{0} ^N$}
    to (7,0.75)
    node [anchor=west] {$\bra{0} ^n$};
\draw [thick] (-2,-0.75)
    node [anchor=east] {$\ket{1}\phantom{{}^N}$}
    to (9,-0.75)
    node [anchor=west] {};
\node (Uf)
    [draw, minimum width=1.2cm, minimum height=1.5cm, fill=white, thick]
    at (2.5,0) {$U_f$};
\node (Fn1)
    [draw, minimum width=1.2cm, minimum height=0.6cm, fill=white, inner sep=-100pt, thick]
    at (0,0.75) {$F _{\{0,1\}^N}$};
\node (Fn2)
    [draw, minimum width=1.2cm, minimum height=0.6cm, fill=white, inner sep=-100pt, thick]
    at (5,0.75) {$F _{\{0,1\}^N} ^{\smash{-1}}$};
\node (F)
    [draw, minimum width=1.2cm, minimum height=0.6cm, fill=white, thick]
    at (0,-0.75) {$F _{\{0,1\}}$};
\node (meas) [Measurement, mwidth=1.2cm, mheight=0.6cm, thick] at (7.4,0.75) {};
\node [anchor=north, font=\scriptsize] at (7.4,0.4) {$\ket 0 ^N \bra 0 ^N$};
\end{tikzpicture}
\end{aligned}
\end{equation}
We read this from left to right, as is conventional for quantum circuit diagrams. We begin by preparing $N$ qubits in the state \ket 0, and a single qubit in the state \ket 1. We then apply the unitary operators $F_{\{0,1\}^N}$ and $F_{\{0,1\}}$, representing the Fourier transform operations on the groups $\{0,1\} ^N$ and $\{0,1\}$ respectively. The operator $U_f$ is then applied, followed by an inverse Fourier transform $\smash{F _{\{0,1\}^N} ^{-1}}$ on the upper family of $N$ qubits, and a projective measurement onto the state $\ket 0 ^N$. The lower qubit plays no role after the application of the unitary~$U_f$.

This seems to suggest that the group structure on $\{0,1\}^N$ is relevant after all, since it is used to construct the Fourier transform operator $F_ {\{0,1\}^N}$. Other descriptions of the Deutsch-Jozsa algorithm involve an application of the $N$-fold Hadamard gate $H ^{\otimes N}$, but this again implies the codomain of $f$ is of the form $\{0,1\}^N$. However, we only apply this operator to the element $\ket{0} ^N$, which represents the identity element of the group. On this element, the Fourier transform acts as follows, creating an even superposition of the group elements:
\begin{equation}
\label{eq:superpositionstate}
F _{\{0,1\} ^N} (\ket 0 ^N) = \frac{1}{\sqrt{2^N}} \sum _{k \in \{0,1\}^N} \ket k
\end{equation}
But this superposition is \textit{independent} of the group structure, since the Fourier transform on any abelian group gives this result on the identity element. The application of \smash{$F _{ \{0,1\}^N} ^{-1}$} at the end of the protocol is insensitive to the group structure for the same reason, since it is followed by a projective measurement onto the state $\ket 0 ^N$, and hence its sole function is to allow an effective projective measurement onto the same superposition state~\eqref{eq:superpositionstate}.  

Further questions are also raised by the quantum circuit given above.
\question{Why should we prepare the upper qubits in the state $\ket 0 ^N$?}

\noindent
A partial answer to this question is given by the reasoning above: by the action of the Fourier transform, it allows us to access an even superposition of every element of $\{0,1\}^N$, and hence, in some sense, probe our function $f$ on ``every input simultaneously''. But this is a vague notion: for what high-level reason should it enable the protocol to succeed? We can ask a similar question for the lower qubit:

\question{Why should we prepare the lower qubit in the state \ket 1?}

\noindent
The Fourier transform $F_ {\{0,1\}}$ acts on this state to produce the superposition $\frac{1}{\sqrt{2}} (\ket 0 - \ket 1)$. Why should this be the correct state to feed into the operator $U_f$?

The last step of the algorithm is to perform a projective measurement on the upper family of qubits onto the state~$\ket 0 ^N$. As discussed, given that this follows an application of the inverse Fourier transform~\smash{$F _{ \{0,1\}^N} ^{-1}$}, the overall effect is to measure the qubits in the state~\eqref{eq:superpositionstate}. If the projective measurement is successful, then we can conclude with certainty that our original function $f$ is balanced. If the measurement fails, then we can conclude with certainty that $f$ is constant.

Some further questions naturally raise themselves here.
\question{Why should we measure the upper qubits in the state $\ket 0 ^N$?}

\vspace{-5pt}

\question{Why should we expect the measurement to succeed or fail exactly when the original function was constant or balanced?}

\vspace{-5pt}
\question{Should measurement on the second family of qubits play a role?}

\noindent
A high-level understanding of the Deutsch-Jozsa algorithm will provide good answers to all these questions, as we see in the next section.

More ambitiously, we could also ask the following final question.

\question{Can this algorithm be generalized?}

\noindent
This traditional presentation gives no indication as to whether this is the case. In fact, there is a broad class of possible generalizations, and their structure emerges naturally from our new topological approach.

\section{Topological Deutsch-Jozsa}
\label{sec:newdeutschjozsa}

\subsection{Introduction}

\noindent
We now present a new topological perspective on the Deutsch-Jozsa algorithm, making use of the topological formalism for algebra introduced in Appendix~\ref{sec:topologicalalgebra}. This formalism will make clear many of the  mysterious features of the traditional presentation of the algorithm, as highlighted in the account in Section~\ref{sec:olddeutschjozsa}, by showing how the functionality of the algorithm is enabled by its topological structure.

The traditional presentation focused on properties of a function $\{0,1\}^N \sxto {\scriptstyle f} \{0,1\}$ between abelian groups. However, we argued that at no point in the protocol was the group structure on $\{0,1\}^N$ used in an essential way. For this reason, in our new perspective, we redefine our function to be of type
\begin{equation}
S \sxto {\displaystyle f} \{0,1\}
\end{equation}
where $S$ is a finite set.

The overall structure of the algorithm is given by the topological diagram below, in which time flows from bottom to top. The preparation, unitary dynamics and measurement phases are clearly indicated.
\begin{equation}
\begin{aligned}
\begin{tikzpicture}[thick, scale=\licsscale, scale=0.8]
\begin{pgfonlayer}{foreground}
    \node (dot) [smallcircle] at (0,1) {};
    \node (f) [smallbox, anchor=south, thick] at (0.7,2) {$f$};
    \node (m) [circlelabel, thick] at ([xshift=0.7cm, yshift=1cm] f.north) {$m$};
\end{pgfonlayer}
\draw (0,-0.25)
        node (bd) [smallcircle] {}
    to (0,1)
    to [out=\nwangle, in=south] (-0.7,2)
    to ([yshift=1.4cm] m.center -| -0.7,1)
        node (ts) [smallcircle] {}
        node [anchor=east, inner sep=6pt, smallfont] {$\displaystyle \frac{1}{\sqrt{|S|}}$};
\draw (0,1)
    to [out=\neangle, in=south] (f.south)
    to (f.north)
    to [out=up, in=\swangle] +(0.7,1)
    to [out=\seangle, in=up] +(0.7,-1)
    to (2.1,-0.25)
        node (sdag) [smallbox] {$\sigma ^\dag$};
\draw (m.center) to +(0,1.75)
        node [above] {$\{0,1\}$};
\node [anchor=east, smallfont] at (bd.west) {$\displaystyle \frac{1}{\sqrt{|S|}}$};
\node [anchor=east, smallfont] at (sdag.west) {$\displaystyle \frac{1}{\sqrt{2}}$};
\node at (3.5,-0.25) [anchor=west] {\textit{Preparation}};
\draw [sepstyle] (-2.3,0.5) to (7.3,0.5);
\node at (f.center -| 3.5,0) [anchor=west] {\textit{Dynamics}};
\draw [sepstyle] (-2.3,4.5) to (7.3,4.5);
\node at (ts.center -| 3.5,0) [anchor=west] {\textit{Measurement}};
\end{tikzpicture}
\end{aligned}
\end{equation}
Over the course of this section we explain why this diagram represents the conventional Deutsch-Jozsa algorithm, and show how its form gives rise to a topological proof of correctness.

We then show  how we can replace the group $\{0,1\}$ by an arbitrary finite group, and obtain a natural generalization of the Deutsch-Jozsa procedure. The algorithm we describe is close to that of Batty, Braunstein, Duncan and H\o yer~\cite{h99-coqa, bdb06-etp}, but seems strictly more general. The topological approach gives rise to a new proof of correctness, which is both shorter and more structurally informative than that allowed by the original approach.

\subsection{Constant and balanced functions}

\noindent
For the function $S \sxto {\scriptstyle f} \{0,1\}$ to be \textit{constant} means that it must factor via the 1-element set:
\begin{equation}
\label{eq:ffactor}
\begin{aligned}
\begin{tikzpicture}
\node [anchor=east] (S) at (0,0) {$S$};
\node (1) at (1.2,0) {$1$};
\node [anchor=west] (G) at (2.4,0) {$\{0,1\}$};
\draw [->] (S) to node [auto, swap] {} (1);
\draw [->] (1) to node [auto, swap] {$x$} (G);
\draw [->] (S) to [out=45, in=135] node [auto] {$f$} (G);
\end{tikzpicture}
\end{aligned}
\end{equation}
The function $S \to 1$ is the unique function to the 1\-element set, and the function \mbox{$1 \sxto {\scriptstyle x} \{0,1\}$} selects the element of $\{0,1\}$ which is the image of the function~$f$. Linearizing these functions to produce linear maps between the free vector spaces on the original sets, expression~\eqref{eq:ffactor} has the following graphical representation:
\begin{equation}
\label{eq:fconstant}
\begin{aligned}
\begin{tikzpicture}[thick, scale=\licsscale]
\draw (0,0.3) node [below] {$S$} to (0,3) node [above] {$\{0,1\}$};
\node [smallbox] at (0,1.65) {$f$};
\end{tikzpicture}
\end{aligned}
\quad=\quad
\begin{aligned}
\begin{tikzpicture}[thick, scale=\licsscale]
\draw (0,0.3) node [below] {$S$}
    to (0,1);
\node [smallcircle] at (0,1) {};
\draw (0,2) to (0,3) node [above] {$\{0,1\}$};
\node [smallbox] at (0,2) {$x$};
\end{tikzpicture}
\end{aligned}
\end{equation}
The function $f$ is constant iff there exists some function $x$ satisfying this condition.

\renewcommand\matrix[1]{\begin{pmatrix}#1\end{pmatrix}}
We now consider the case that $f$ is balanced, meaning it takes each possible value on an equal number of elements of its domain. We can express this condition with the following equation:
\begin{equation}
\matrix{1 & -1} \circ \sum _{s \in S} \ket {f(s)} = 0
\end{equation}
Here we have composed with the matrix \mbox{$\{0,1\} \sxto{\scriptstyle ( \begin{smallmatrix} 1 & -1 \end{smallmatrix} )} \C$}, which contributes $+1$ to the sum for each element $s \in S$ with $f(s)=\ket 0$, and $-1$ for each element with $f(s) = \ket 1$. So it is clear the sum is zero exactly when $f$ is balanced. We can reexpress this equation as follows:
\begin{equation}
\matrix{1 & -1} \circ f \circ \sum _{s \in S} \ket {s} = 0
\end{equation}
We can express this graphically using our topological formalism, where we write $\sigma$ for the linear map \mbox{$\{0,1\} \sxto{\scriptstyle ( \begin{smallmatrix} 1 & -1 \end{smallmatrix} )} \C$}, which is in fact the alternating representation of the group:
\begin{equation}
\label{eq:fbalanced}
\begin{aligned}
\begin{tikzpicture}[thick, scale=\licsscale]
\draw (0,-0.5) node [smallcircle] {} to (0,2.4) node [smallbox, anchor=south] {$\sigma$};
\draw (0,1) node [smallbox] {$f$};
\node at (0.0,0.1) [anchor=west] {$S$};
\node at (0.0,1.85) [anchor=west] {$\{0,1\}$};
\end{tikzpicture}
\end{aligned}
\hspace{10pt}=\quad
0
\end{equation}
Equations~\eqref{eq:fconstant} and \eqref{eq:fbalanced} can thus serve to replace the  constant and balanced properties, giving us entirely topological conditions on the function~$f$.

\subsection{Building the unitary}
\label{sec:buildingtheunitary}

At the heart of the Deutsch-Jozsa algorithm is the unitary operator $U_f$, with action defined by expression~\eqref{eq:ufaction}. We can define this topologically as follows:
\begin{equation}
\label{eq:uftopological}
\begin{aligned}
\begin{tikzpicture}[thick, scale=\licsscale]
\draw (0,0) node [below] {$S\vphantom{\{}$} to (0,4.1) node [above] {$S\vphantom{\{}$};
\draw (2,0) node [below] {$\{0,1\}$} to (2,4.1) node [above] {$\{0,1\}$};
\node [draw, minimum width=2.0cm, minimum height=2cm, fill=white] at (1,2.05) {$U_f$};
\end{tikzpicture}
\end{aligned}
\quad:=\quad
\begin{aligned}
\begin{tikzpicture}[thick, scale=\licsscale]
\begin{pgfonlayer}{foreground}
    \node (dot) [smallcircle] at (0,1) {};
    \node (f) [smallbox, anchor=south, thick] at (0.7,2) {$f$};
    \node (m) [circlelabel, thick] at ([xshift=0.7cm, yshift=1cm] f.north) {$m$};
\end{pgfonlayer}
\draw (0,0.25)
        node [below] {$S\vphantom{\{}$}
    to (0,1)
    to [out=\nwangle, in=south] (-0.7,2)
    to ([yshift=0.75cm] m.center -| -0.7,1)
        node [above] {$S\vphantom{\{}$};
\draw (0,1)
    to [out=\neangle, in=south] (f.south)
    to (f.north)
    to [out=up, in=\swangle] +(0.7,1)
    to [out=\seangle, in=up] +(0.7,-1)
    to (2.1,0.25)
        node [below] {$\{0,1\}$};
\draw (m.center) to +(0,0.75)
        node [above] {$\{0,1\}$};
\end{tikzpicture}
\end{aligned}
\end{equation}
To show this acts in the correct way, we evaluate its effect on a general input element, for some choice of $s \in S$ and $b \in \{0,1\}$:
\allowdisplaybreaks
\vspace{1cm}
\begin{align}
\nonumber
\ket s \otimes \ket b 
\quad&\xmapsto{\begin{tikzpicture}[thick, scale=\licsscale]
\draw [use as bounding box, draw=none] (-0.5,0) rectangle (1.1,1);
\draw (0,0) to (0,0.5);
\draw (0,0.5) to [out=\neangle, in=south] (0.4,1);
\draw (0,0.5) to [out=\nwangle, in=south] (-0.4,1);
\node [smallcircle] at (0,0.5) {};
\draw (1,0) to (1,1);
\end{tikzpicture}}
\quad
\ket s \otimes \ket s \otimes \ket b
\\[1cm]
\nonumber
&\xmapsto {\begin{tikzpicture}[thick, scale=\licsscale]
\draw [use as bounding box, draw=none] (-0.8,0) rectangle (0.8,1);
\draw (0,0) to (0,1);
\draw (-0.7,0) to (-0.7,1);
\draw (0.7,0) to (0.7,1);
\node [smallbox] at (0,0.5) {$f$};
\end{tikzpicture}}
\quad \ket s \otimes \ket {f(s)} \otimes \ket b
\\[1cm]
&\xmapsto{\begin{tikzpicture}[thick, xscale=-1, yscale=-1, scale=\licsscale]
\draw [use as bounding box, draw=none] (-0.5,0) rectangle (1.1,1);
\draw (0,0) to (0,0.5);
\draw (0,0.5) to [out=\neangle, in=south] (0.4,1);
\draw (0,0.5) to [out=\nwangle, in=south] (-0.4,1);
\node [circlelabel] at (0,0.5) {$m$};
\draw (1,0) to (1,1);
\end{tikzpicture}}
\quad \ket s \otimes \ket{ f(s) \oplus b}
\end{align}
This matches our earlier definition~\eqref{eq:ufaction}.

\subsection{Performing the algorithm}

To obtain a topological expression for the result of the algorithm, we precompose expression~\eqref{eq:uftopological} with the choice of input state. The input states have the following graphical form:
\begin{equation}
\label{eq:topologicalinput}
\begin{aligned}
\begin{tikzpicture}[thick, scale=\licsscale]
\draw (0,0) node (dot) [smallcircle] {} to (0,1);
\draw (2,0) node (s) [smallbox] {$\sigma ^\dag$} to +(0,1);
\node at (dot.west) [anchor=east] {$\displaystyle \frac{1}{\sqrt{|S|}}$};
\node at (s.west) [anchor=east] {$\displaystyle \frac{1}{\sqrt{2}}$};
\end{tikzpicture}
\end{aligned}
\end{equation}
This expression includes the action of the Fourier transform operations on the input states of the traditional Deutsch-Jozsa circuit~\eqref{eq:traditionaldeutschjozsacircuit}, which as we remarked in Section~\ref{sec:olddeutschjozsa} have the effect of preparing the equal superposition state and the adjoint of the irreducible representation $\sigma$, up to a normalizing factor.

By composing expressions~\eqref{eq:topologicalinput} and \eqref{eq:uftopological} we obtain the following topological expression for the quantum state after the protocol has been implemented, but before any measurement is performed:
\begin{equation}
\label{eq:originalsetup}
\begin{aligned}
\begin{tikzpicture}[thick, scale=\licsscale]
\begin{pgfonlayer}{foreground}
    \node (dot) [smallcircle] at (0,1) {};
    \node (f) [smallbox, anchor=south, thick] at (0.7,2) {$f$};
    \node (m) [circlelabel, thick] at ([xshift=0.7cm, yshift=1cm] f.north) {$m$};
\end{pgfonlayer}
\draw (0,0.25)
        node [smallcircle] {}
    to (0,1)
    to [out=\nwangle, in=south] (-0.7,2)
    to ([yshift=0.75cm] m.center -| -0.7,1)
        node [above] {$S\vphantom{\{}$};
\draw (0,1)
    to [out=\neangle, in=south] (f.south)
    to (f.north)
    to [out=up, in=\swangle] +(0.7,1)
    to [out=\seangle, in=up] +(0.7,-1)
    to (2.1,0.25)
        node [smallbox] {$\sigma ^\dag$};
\draw (m.center) to +(0,0.75)
        node [above] {$\{0,1\}$};
\node [anchor=east] at (current bounding box.west) {$\displaystyle \frac{1}{\sqrt{2 |S|}}$};
\end{tikzpicture}
\end{aligned}
\end{equation}
We omit the inverse Fourier transform present in the traditional circuit diagram~\eqref{eq:traditionaldeutschjozsacircuit} as we view it as forming part of the measurement. Due to property~\eqref{eq:copyonleg} of the multiplication vertex, we can rewrite this to obtain the following equivalent expression:
\begin{equation}
\label{eq:afteralgorithm}
\begin{aligned}
\begin{tikzpicture}[thick, scale=\licsscale]
\begin{pgfonlayer}{foreground}
    \node (dot) [smallcircle] at (0,1) {};
    \node (f) [smallbox, anchor=south, thick] at (0.7,2) {$f$};
\end{pgfonlayer}
\draw (0,0.25)
        node [smallcircle] {}
    to (0,1)
    to [out=\nwangle, in=south] (-0.7,2)
    to ([yshift=1.25cm] m.center -| -0.7,1)
        node [above] (S) {$S\vphantom{\{}$};
\draw (0,1)
    to [out=\neangle, in=south] (f.south)
    to (f.north)
    to +(0,1) node [smallbox] {$\sigma$};
\draw (2.1,2)
        node [smallbox, anchor=south] {$\sigma ^\dag$}
    to (2.1,2.35 |- S.south)
        node [above] {$\{0,1\}$};
\node [anchor=east] at (current bounding box.west) {$\displaystyle \frac{1}{\sqrt{2 |S|}}$};
\end{tikzpicture}
\end{aligned}
\end{equation}
This looks superficially different to equation~\eqref{eq:copyonleg}  only because the representation is 1\-dimensional, and so the intermediate wires become trivial.

We see that the quantum state becomes a product state. We now examine the effect of a measurement on this state in the case that the function $f$ is constant or balanced.

\subsection{Result when $f$ is constant}
\label{sec:constant}

When $f$ is constant, we apply the topological equation~\eqref{eq:fconstant} to rewrite expression~\eqref{eq:afteralgorithm} for the quantum state after the unitary evolution stage of the algorithm has been completed, obtaining the following diagram:
\begin{equation}
\label{eq:fconstantsimplified}
\begin{aligned}
\begin{tikzpicture}[thick, scale=\licsscale]
\begin{pgfonlayer}{foreground}
    \node (dot) [smallcircle] at (0,1) {};
    \node (f) [smallcircle, anchor=south, thick] at (0.7,2) {};
\end{pgfonlayer}
\draw (0,0.25)
        node [smallcircle] {}
    to (0,1)
    to [out=\nwangle, in=south] (-0.7,2)
    to ([yshift=0.75cm] m.center -| -0.7,1)
        node [above] (S) {$S\vphantom{\{}$};
\draw (0,1)
    to [out=\neangle, in=south] (f.south);
\draw (2.1,2)
        node [smallbox, anchor=south] {$\sigma ^\dag$}
    to +(0,2.35)
        node [above] {$\{0,1\}$};
\node [anchor=east] at (current bounding box.west) {$\displaystyle \frac{1}{\sqrt{2 |S|}}$};
\draw (0.7,3)
        node [smallbox] {$x$}
    to (0.7,4)
        node [smallbox] {$\sigma$};
\end{tikzpicture}
\end{aligned}
\end{equation}
The composite $\sigma \circ x$ represents the value of the irreducible representation $\sigma$ on the element $x \in \{0,1\}$ which is the image of the function $f$. Since $\sigma = (\,1\,\,\,\,{-1}\,)$, this composite equals~$\pm 1$. Using this fact, along with the properties of the topological algebra on a finite set, we simplify diagram~\eqref{eq:fconstantsimplified} to obtain the following expression:
\begin{equation}
\label{eq:fconstantsimplified2}
\begin{aligned}
\begin{tikzpicture}[thick, scale=\licsscale]
\begin{pgfonlayer}{foreground}
\end{pgfonlayer}
\draw (0,0)
        node [smallcircle] {}
    to (0,2)
        node [above] {$S\vphantom{\{}$};
\draw (2,0)
        node [smallbox] {$\sigma ^\dag$}
    to (2,2)
        node [above] {$\{0,1\}$};
\node [anchor=east] at (current bounding box.west) {$\displaystyle \frac{\pm 1}{\sqrt{2 |S|}}$};
\end{tikzpicture}
\end{aligned}
\end{equation}
This is a product state, with the left-hand family of qubits in the superposition state~\eqref{eq:superpositionstate}. As a result, the projective measurement performed onto the superposition state is guaranteed to be successful. This demonstrates that, in the case that $f$ is constant, we will measure the system $S$ to be in the superposition state~\eqref{eq:superpositionstate} with certainty.

\subsection{Result when $f$ is balanced}
\label{sec:balanced}

We now consider the result of a measurement of the left-hand system in the superposition state~\eqref{eq:superpositionstate} in the case that $f$ is balanced. The following diagram represents a projector onto the 1\-dimensional subspace of $S$ corresponding to the uniform superposition state:
\begin{equation}
\frac{1}{|S|}
\begin{aligned}
\begin{tikzpicture}[thick, scale=\licsscale]
\draw (0,0.3) node [smallcircle] {} to (0,1) node [above] {$S$};
\draw (0,-0.3) node [smallcircle] {} to (0,-1) node [below] {$S$};
\end{tikzpicture}
\end{aligned}
\end{equation}
Postcomposing this with expression~\eqref{eq:afteralgorithm} representing the quantum state after the main body of the algorithm has been performed, and then simplifying the result, we obtain the following diagram:
\begin{equation}
\begin{aligned}
\begin{tikzpicture}[thick, scale=\licsscale, xscale=0.8]
\begin{pgfonlayer}{foreground}
    \node (dot) [smallcircle] at (0,1) {};
    \node (f) [smallbox, anchor=south, thick] at (0.7,2) {$f$};
\end{pgfonlayer}
\draw (0,1)
    to [out=\neangle, in=south] (f.south)
    to (f.north)
    to +(0,1)
        node (sigma) [smallbox] {$\sigma$};
\draw (2.1,2)
        node [smallbox, anchor=south] {$\sigma ^\dag$}
    to (2.1,5.75)
        node [above] {$\{0,1\}$};
\draw (0,0.25)
        node [smallcircle] {}
    to (0,1)
    to [out=\nwangle, in=south] (-0.7,2)
    to (-0.7,4.5)
        node [smallcircle] {};
\draw (-0.7,5)
        node [smallcircle] {}
    to (-0.7,5.75)
        node [above] {$S\vphantom{\{}$};
\node [anchor=east, smallfont] at (current bounding box.west) {$\displaystyle \frac{1}{\sqrt{2 |S|^3}}$};
\end{tikzpicture}
\end{aligned}
\hspace{0pt}=\hspace{0pt}
\begin{aligned}
\begin{tikzpicture}[thick, scale=\licsscale, xscale=0.8]
\begin{pgfonlayer}{foreground}
    \node (dot) [smallcircle, black] at (0.7,1) {};
    \node (f) [smallbox, anchor=south, thick] at (0.7,2) {$f$};
\end{pgfonlayer}
\draw (0.7,1)
    to (f.north)
    to +(0,1)
        node (sigma) [smallbox] {$\sigma$};
\draw (2.1,2)
        node [smallbox, anchor=south] {$\sigma ^\dag$}
    to (2.1,5.75)
        node [above] {$\{0,1\}$};
\draw (-0.7,5)
        node [smallcircle] {}
    to (-0.7,5.75)
        node [above] {$S\vphantom{\{}$};
\begin{pgfscope}
\node at (0.7,0.25) [smallcircle, fill=white, draw=white] {};
\end{pgfscope}
\node [anchor=east, smallfont] at (current bounding box.west) {$\displaystyle \frac{1}{\sqrt{2 |S|^3}}$};
\end{tikzpicture}
\end{aligned}
\end{equation}
By equation~\eqref{eq:fbalanced} defining the case when $f$ is balanced, we see that this composite is equal to~0. We conclude that there is no  probability of measuring the system to be in the uniform superposition state~\eqref{eq:superpositionstate}. This completes the topological demonstration of correctness of the Deutsch-Jozsa algorithm.

\subsection{Generalization to arbitrary finite groups}
\label{sec:gendeutschjozsa}

Our topological approach allows the Deutsch-Jozsa to be easily generalized from the group $\{0,1\}$ to an arbitrary finite groups $G$, with the function then having the type
\begin{equation}
S \sxto f G .
\end{equation}
In the traditional Deutsch-Jozsa algorithm described earlier in this section, the alternating representation $\{0,1\} \sxto \sigma \C$ plays a central role. In this generalization we replace it with an arbitrary irreducible representation $G \sxto \rho \Mat(n)$, where $n$ is the dimension of the representation. We continue our convention of viewing the symbol $G$ as representing both a finite group and the free vector space $\C[G]$ depending on context. The concepts of constant and balanced functions will also be generalized, in a way which we will see below.

The generalization of the Deutsch-Jozsa algorithm presented here is closely related to that of Batty, Braunstein, Duncan and H\o yer~\cite{h99-coqa,bdb06-etp}. We believe our formulation seems strictly more general. In addition, thanks to the topological calculus, the proof of correctness is only a couple of pages, rather than the many pages required in these references. The topological formalism also makes it clear exactly \textit{why} the generalized algorithm is successful.

We begin by defining generalizations of the constant and balanced properties. For each isomorphism class of irreducible representation $\rho$, in which the group acts on the Hilbert space $V_\rho$, we pick an orthogonal projector $V_\rho \sxto {P_\rho} V_\rho$ onto some subspace, such that at least one of these projectors is nonzero. With respect to such a family $P_\rho$ of projectors, we make the following definitions of the properties \textit{$P_\rho$\-balanced} and \textit{$P_\rho$\-constant}:
\begin{calign}
\label{eq:genbalancedconstant}
\forall \rho \quad
\begin{aligned}
\begin{tikzpicture}[thick, scale=\licsscale]
\draw (0,0.25) node [smallcircle] {} node [below, white] {$S$} to (0,1.0) node [smallbox] {$f$} to (0,2) node (rho) [smallbox] {$\rho$};
\draw (rho.55) to (rho.55 |- 0,3) node [smallbox, inner sep=1pt, minimum width=0pt] {$P_\rho$} to (rho.55 |- 0,3.75);
\draw (rho.125) to (rho.125 |- 0,3.75);
\node at (0,3.75) [above] {\makebox[0pt]{$\Mat( \dim (\rho))$}};
\end{tikzpicture}
\end{aligned}
\hspace{5pt}=\hspace{5pt}
0
&
\forall \rho \,\, \exists x_\rho 
\quad
\begin{aligned}
\begin{tikzpicture}[thick, scale=\licsscale]
\draw (0,0.25) node [below] {$S$} to (0,1.0) node [smallbox] {$f$} to (0,2) node (rho) [smallbox] {$\rho$};
\draw (rho.55) to (rho.55 |- 0,3) node [smallbox, inner sep=1pt, minimum width=0pt] {$P_\rho$} to (rho.55 |- 0,3.75);
\draw (rho.125) to (rho.125 |- 0,3.75);
\node at (0,3.75) [above] {\makebox[0pt]{$\Mat( \dim (\rho))$}};
\end{tikzpicture}
\end{aligned}
\hspace{7pt}=\hspace{7pt}\begin{aligned}
\begin{tikzpicture}[thick, scale=\licsscale]
\draw (0,0.25) node [below] {$S$} to (0,1.5) node [smallcircle] {};
\node (x) at (0,2.5) [smallbox] {$x_\rho$};
\draw (x.125) to (x.125 |- 0,3.75);
\draw (x.55) to (x.55 |- 0,3.75);
\node [above] at (0,3.75) {\makebox[0pt]{$\Mat(\dim(\rho))$}};
\end{tikzpicture}
\end{aligned}
\\
\nonumber
\text{$P_\rho$-balanced}
&
\text{$P_\rho$-constant}
\end{calign}
We then define an element $\phi$ of the group algebra of $G$ in the following way:
\begin{equation}
\begin{aligned}
\begin{tikzpicture}[thick, scale=\licsscale]
\begin{pgfonlayer}{foreground}
\draw [thick] (2.1,0.5) node [above] {$G$} to (2.1,-0.75)
    node [smallbox, minimum width=17pt] {$\phi$};
\end{pgfonlayer}
\node at (2.1,-0.25) [smallbox, minimum width=17pt, white] (rhodag) {$\phi$};
\node [white] (prho) at ([yshift=-0.3cm] rhodag.-55) [anchor=north, smallbox, minimum width=13pt] {$P_\rho$};
\draw [white] ([yshift=-0.0cm] prho.south) to +(0,-0.25) to [out=down, in=down, looseness=2] ([yshift=-0.25cm] prho.south -| rhodag.-125);
\end{tikzpicture}
\end{aligned}
\quad:=\quad
\sum _\rho c_\rho \,\, 
\begin{aligned}
\begin{tikzpicture}[thick, scale=\licsscale]
\draw (2.1,0.5) node [above] {$G$} to (2.1,-0.25)
    node [smallbox, minimum width=17pt] (rhodag) {$\rho ^\dag$};
\node (prho) at ([yshift=-0.3cm] rhodag.-55) [anchor=north, smallbox, minimum width=13pt] {$P_\rho$};
\draw (rhodag.-55) to (prho.north);
\draw (prho.south) to +(0,-0.25) to [out=down, in=down, looseness=2] ([yshift=-0.25cm] prho.south -| rhodag.-125) to (rhodag.-125);
\end{tikzpicture}
\end{aligned}
\end{equation}
The summation is over the equivalence classes of irreducible representations $\rho$ of $G$. The coefficients $c_\rho$ are arbitrary but nonzero, and can always be picked to ensure that $\phi$ has norm 1. It is here that our procedure is more general than that of Batty, Duncan and Braunstein~\cite{bdb06-etp}, who require a single $P_\rho$ to have trace 1, and all others to have rank 0. Also, they require the nonzero projector to have support given by a particular basis element of the representation space.

Our generalized Deutsch-Jozsa algorithm proceeds as in the standard case described earlier in this section, except we use this state $\phi$ as the initial state for the system $G$. After the unitary dynamics step, the state of the system has the following topological form:
\def\tempxscale{1}
\begin{equation}
\label{eq:gendjafterunitary}
\sum_\rho c_\rho
\begin{aligned}
\begin{tikzpicture}[thick, xscale=\tempxscale, scale=\licsscale, scale=0.8]
\begin{pgfonlayer}{foreground}
    \node (dot) [smallcircle] at (0,1) {};
    \node (f) [smallbox, anchor=south, thick] at (0.7,2) {$f$};
    \node (m) [circlelabel, thick] at ([xshift=0.7cm, yshift=1.9cm] f.north) {$m$};
\end{pgfonlayer}
\draw (0,-0.0)
        node [smallcircle] (dot2) {}
    to (0,1)
    to [out=\nwangle, in=south] (-0.7,2)
    to ([yshift=1.0cm] m.center -| -0.7,0.75)
        node [above] {$S$};
\draw (0,1)
    to [out=\neangle, in=south] (f.south)
    to (f.north) to +(0,0.9)
    to [out=up, in=\swangle] (m.center)
    to [out=\seangle, in=up] +(0.7,-1)
    to (2.1,1)
        node [smallbox, minimum width=17pt] (rhodag) {$\rho ^\dag$};
\node (prho) at ([yshift=-0.3cm] rhodag.-55) [anchor=north, smallbox, minimum width=13pt] {$P_\rho$};
\draw (m.center) to +(0,1.0)
        node [above] {$G$};
\node [anchor=east, smallfont] at (dot2.west) {\makebox[0pt][r]{$\displaystyle \frac{1}{\sqrt{|S|}}$}};
\draw (rhodag.-55) to (prho.north);
\draw (prho.south) to [out=down, in=down, looseness=2] (prho.south -| rhodag.-125) to (rhodag.-125);
\end{tikzpicture}
\end{aligned}
\hspace{-5pt}
\quad=\quad
\sum_\rho c_\rho
\begin{aligned}
\begin{tikzpicture}[thick, xscale=-1, xscale=\tempxscale, scale=\licsscale, scale=0.8]
\draw (1,-1.5) node (f) [smallbox] {$f$} to (1,-0.4) node (r2) [smallbox] {$\rho$};
\draw (0.3, 0.9) node (r3) [smallbox] {$\rho ^\dag$};
\draw (r2.120) to [out=up, in=down, in looseness=0.7] (r3.-120);
\draw (r3.north) to (0.3,1.7) node [above] {$G$};
\draw (f.south) to [out=down, in=\nwangle] +(0.7,-1) node [smallcircle] (dot) {} to +(0,-1) node [smallcircle] {} node [anchor=east, smallfont] {\makebox[0pt][r]{$\displaystyle \frac{1}{\sqrt{|S|}}$}};
\draw (dot.center) to [out=\neangle, in=down] ([xshift=1.4cm] f.south) to (2.4,1.7) node [above] {$S$};
\begin{pgfonlayer}{foreground}
\node (prho) [anchor=south, smallbox, thick] at ([xshift=-1.4cm] r2.south) {$P_\rho$};
\end{pgfonlayer}
\draw (r2.60) to [out=up, in=up, looseness=1.2] ([xshift=0.5cm] r2.north -| prho.south) to ([xshift=0.5cm, yshift=-0.0cm] prho.south) to [out=down, in=down, looseness=2] (prho.south) to (prho.north |- r2.north) to [out=up, in=down] (r3.-60);
\end{tikzpicture}
\end{aligned}
\nonumber
\end{equation}
\begin{equation}
\quad=\quad
\sum_\rho c_\rho
\begin{aligned}
\begin{tikzpicture}[thick, xscale=-1, xscale=\tempxscale, scale=\licsscale, scale=1]
\draw (1,-1.5) node (f) [smallbox] {$f$} to (1,-0.7) node (r2) [smallbox] {$\rho$};
\draw (1, 0.9) node (r3) [smallbox] {$\rho ^\dag$};
\draw (r2.125) to [out=up, in=down, in looseness=0.7] (r3.-125);
\draw (r3.north) to (1,1.7) node [above] {$G$};
\draw (f.south) to [out=down, in=\nwangle] +(0.7,-1) node [smallcircle] (dot) {} to +(0,-1) node [smallcircle] {} node [anchor=east, smallfont] {\makebox[0pt][r]{$\displaystyle \frac{1}{\sqrt{|S|}}$}};
\draw (dot.center) to [out=\neangle, in=down] ([xshift=1.4cm] f.south) to (2.4,1.7) node [above] {$S$};
\begin{pgfonlayer}{foreground}
\node (prho) [anchor=center, smallbox, thick, minimum width=0pt, inner sep=1pt] at ($(r3.-60)!0.5!(r2.60)$) {$P_\rho$};
\end{pgfonlayer}
\draw (r2.60) to [out=up, in=down] (prho.south) to (prho.north) to [out=up, in=down] (r3.-60);
\end{tikzpicture}
\end{aligned}
\end{equation}
We are now ready to consider the measurement stage of the protocol, which is a projective measurement onto the even superposition state of $S$.

Suppose that $f$ is $P_\rho$-constant. Then using equation~\eqref{eq:genbalancedconstant} we obtain the following expression for the overall state of the joint system:
\begin{equation}
\sum_\rho c_\rho
\begin{aligned}
\begin{tikzpicture}[thick, xscale=-1, xscale=\tempxscale, scale=\licsscale]
\node (1,-1.5) (f) [smallcircle] {};
\node [smallbox] (rhodag) at (0,2.0) {$\rho ^\dag$};
\draw (0, 0.9) node (r3) [smallbox] {$x_\rho$};
\draw (f.south) to [out=down, in=\nwangle] +(0.7,-1) node [smallcircle] (dot) {} to +(0,-1) node [smallcircle, smallfont] {} node [anchor=east, smallfont] {\makebox[0pt][r]{$\displaystyle \frac{1}{\sqrt{|S|}}$}};
\draw (dot) to [out=\neangle, in=down] (1.4,0) to (1.4,2.75) node [above] {$S$};
\draw (rhodag.north) to (0,2.75) node [above] {$G$};
\draw (r3.60) to (rhodag.-60);
\draw (r3.120) to (rhodag.-120);
\end{tikzpicture}
\end{aligned}
\quad=\quad
{\sum_\rho \frac {c_\rho} {\sqrt{|S|}}}
\begin{aligned}
\begin{tikzpicture}[thick, xscale=-1, xscale=\tempxscale, scale=\licsscale]
\draw (0,0) node [smallcircle] {} to (0,2) node [above] {$S$};
\draw (-1.5,0) node (rdag) [smallbox] {$\rho ^\dag$} to (-1.5,2) node [above] {$G$};
\node [smallbox] (x) at (-1.5,-1.2) {$x_\rho$};
\draw (x.60) to (rdag.-60);
\draw (x.120) to (rdag.-120);
\node [white] at (-1,-2.8) {{\makebox[0pt][r]{$\displaystyle \frac{1}{\sqrt{|S|}}$}}};
\end{tikzpicture}
\end{aligned}
\end{equation}
This is a product state, with the system $S$ in the even superposition state. It is not the zero state, since the maps $\rho ^\dag$ are injective, and at least one $x ^\rho$ is nonzero. A projective measurement of $S$ onto the even superposition state is therefore guaranteed to be successful.

Finally, suppose that $f$ is $P_\rho$-balanced. To determine the effect of a projective measurement onto the even superposition state, we compose~\eqref{eq:gendjafterunitary} with a partial isometry onto that subspace of the system $S$:
\begin{equation}
\sum_\rho c_\rho
\hspace{-25pt}
\begin{aligned}
\begin{tikzpicture}[thick, xscale=-1, xscale=\tempxscale, scale=\licsscale]
\draw (1,-1.5) node (f) [smallbox] {$f$} to (1,-0.7) node (r2) [smallbox] {$\rho$};
\draw (1, 0.9) node (r3) [smallbox] {$\rho ^\dag$};
\draw (r2.125) to [out=up, in=down, in looseness=0.7] (r3.-125);
\draw (r3.north) to (1,1.7) node [above] {$G$};
\draw (f.south) to [out=down, in=\nwangle] +(0.7,-1) node [smallcircle] (dot) {} to +(0,-1) node [smallcircle] {} node [anchor=east, smallfont] {\makebox[0pt][r]{$\displaystyle \frac{1}{\sqrt{|S|}}$}};
\draw (dot) to [out=\neangle, in=down] ([xshift=1.4cm] f.south) to (2.4,1.0) node [smallcircle] {} node [anchor=east, smallfont] {$\displaystyle \frac{1}{\sqrt{|S|}}$};
\begin{pgfonlayer}{foreground}
\node (prho) [anchor=center, smallbox, thick, inner sep=1pt, minimum width=0pt] at ($(r3.-60)!0.5!(r2.60)$) {$P_\rho$};
\end{pgfonlayer}
\draw (r2.60) to [out=up, in=down] (prho.south) to (prho.north) to [out=up, in=down] (r3.-60);
\end{tikzpicture}
\end{aligned}
\quad=\quad
\sum_\rho \frac{c_\rho} {|S|}
\begin{aligned}
\begin{tikzpicture}[thick, xscale=-1, xscale=\tempxscale, scale=\licsscale]
\draw (1,-1.5) node (f) [smallbox] {$f$} to (1,-0.7) node (r2) [smallbox] {$\rho$};
\draw (1, 0.9) node (r3) [smallbox] {$\rho ^\dag$};
\draw (r2.125) to [out=up, in=down, in looseness=0.7] (r3.-125);
\draw (r3.north) to (1,1.7) node [above] {$G$};
\draw (f.south) to +(0,-1) node [smallcircle] {};
\begin{pgfonlayer}{foreground}
\node (prho) [smallbox, thick, minimum width=0pt, inner sep=1pt] at ($(r3.-60)!0.5!(r2.60)$) {$P_\rho$};
\end{pgfonlayer}
\draw (r2.60) to [out=up, in=down] (prho.south) to (prho.north) to [out=up, in=down] (r3.-60);
\node at (0.5,-3.80)  [anchor=east, white] {\makebox[0pt][r]{$\displaystyle \frac{1}{\sqrt{|S|}}$}};
\end{tikzpicture}
\end{aligned}
\quad=\quad
0
\end{equation}
Using the definition~\eqref{eq:genbalancedconstant} of a $P_\rho$-balanced function, this composite is zero, so we have shown that in the $P_\rho$-balanced case there is no possibility of measuring the system $S$ in the even superposition state.

\section{The Single-Shot Grover Algorithm}
\label{sec:grover}

\subsection{Introduction}

\noindent
In this section we give a topological analysis of the single-shot Grover algorithm, which is used to search a set $S$ for marked elements, defined in terms of an indicator function
\[
S \sxto f \{ 0,1 \}
\]
onto the group of integers under addition modulo 2. Grover's algorithm is nondeterministic in general, which makes it difficult to treat in our formalism. We restrict attention to the single-shot case in which a single iteration of the algorithm is guaranteed to return a marked element. This corresponds to the case that exactly $\frac 1 4$ of the elements are marked. Our topological analysis makes it clear why Grover's algorithm is always successful in this case. 

We then explore a generalization of Grover's algorithm, made natural by our approach, in which our set $S$ is equipped with a function to an arbitrary finite group. We demonstrate that this generalized algorithm is guaranteed to return an \emph{unbalanced} element of $S$ after a single iteration. Given an irreducible representation $\rho$ of $G$, we define the element $s \in S$ to be \emph{balanced} if the following holds:
\begin{equation}
\rho(f(s)) = \frac 2 {|S|} \sum_{t \in S} \rho(f(t))
\end{equation}
In words, an element $s \in S$ is balanced if $\rho ( f (s ))$  equals twice the average value, using the uniform measure on $S$. If an element is not balanced, then it is \textit{unbalanced}.

To illustrate this, consider the case of $G=\Z _3$, and write $\underline 6$ for the set $\{1, 2, 3, 4, 5, 6 \}$. Consider the function $\underline 6 \sxto f \Z_3$ defined as follows: 
\begin{align*}
f(1) &= 0 & f(5) &= 1 & f(6)&=2
\\
f(2) &= 0
\\
f(3) &= 0
\\
f(4) &= 0
\end{align*}
We choose the representation $\Z_3 \sxto \rho \C$ with $\rho(1) = e ^{2 \pi i / 3}$. Under this representation, the average value of the quantity $\rho(f(n))$ over all $n \in [6]$ is
\begin{equation}
\frac{1}{6}(4 + e ^{2 \pi i/3} + e ^{-2 \pi i/3}) = \frac{1}{2}.
\end{equation}
As a result, the elements 1, 2, 3 and 4 are balanced, since for these values $\rho(f(n))$ takes the value 1 which is twice the average value. The elements 5 and 6 are unbalanced, so these are the only possible measurement outcomes, and the algorithm will determine one of them in a single query.

This works for any function $f: S \to \Z_3$, for any set $S$, as long as it takes the values of $\Z_3$ in the ratio $4:1:1$ in some order. In this general case, as with the simpler scenario we examined above, the algorithm will return an element of $S$ marked with one of the less-frequent values of $\Z_3$ in a single query.

We emphasize that this cannot be handled by the standard Grover algorithm. If we knew beforehand what values of $G$ are taken by the unbalanced elements of $S$ (in our simple example, 1 and 2), and if we knew the fraction of elements which are unbalanced (in this case $\frac 1 3$), then we could compose $f$ with a function to $\Z_2$ that takes $0 \mapsto 0$, $1 \mapsto 1$ and $2 \mapsto 1$, and apply an optimal variant of Grover's conventional search algorithm to find a rare element with certainty after a single iteration~\cite{ck99-qds}. However, in general, this information may not be known.

\subsection{Topological presentation}

\noindent
Our topological perspective demonstrates why the single-shot Grover algorithm is successful, when exactly a quarter of the elements are marked. The overall topological structure of the algorithm is as follows:
\begin{equation}
\label{eq:grovertopological}
\begin{aligned}
\begin{tikzpicture}[thick, scale=\licsscale]
\begin{pgfonlayer}{foreground}
    \node (dot) [smallcircle] at (0,1) {};
    \node (f) [smallbox, anchor=south, thick] at (0.7,2) {$f$};
    \node (m) [circlelabel, thick] at ([xshift=0.7cm, yshift=1cm] f.north) {$m$};
\end{pgfonlayer}
\draw (0,-0.25)
        node [smallcircle] (bdot) {}
    to (0,1)
    to [out=\nwangle, in=south] (-0.7,2)
    to ([yshift=1.4cm] m.center -| -0.7,1)
        node (rho) [smallbox] {$s ^\dag$};
\draw (0,1)
    to [out=\neangle, in=south] (f.south)
    to (f.north)
    to [out=up, in=\swangle] +(0.7,1)
    to [out=\seangle, in=up] +(0.7,-1)
    to (2.1,-0.25)
        node [smallbox] (sigmadag) {$\sigma ^\dag$};
\draw (m.center) to (1.4,5.75)
        node [above] {$\{0,1\}$};
\node [anchor=east] at (bdot.west) {$\displaystyle \frac{1}{\sqrt{|S|}}$};
\node [anchor=east] at (sigmadag.west) {$\displaystyle \frac{1}{\sqrt{2}}$};
\node [smallbox] at (-0.7,3.5) {$D$};
\node at (3.5,-0.25) [anchor=west] {\textit{Preparation}};
\draw [sepstyle] (-2,0.5) to (7,0.5);
\node at (3.5,2.25) [anchor=west] {\textit{Dynamics}};
\draw [sepstyle] (-2,4.25) to (7,4.25);
\node at (3.5,5) [anchor=west] {\textit{Measurement}};
\end{tikzpicture}
\end{aligned}
\end{equation}
This represents the final state after a successful projective measurement of the first system in the basis state $\ket s$.  Furthermore, we can decompose $D$ in the following way, as a linear combination of two other diagrams:
\begin{equation}
\label{eq:difftopological}
\begin{aligned}
\begin{tikzpicture}[thick, scale=\licsscale]
\draw (0,0) node [below] {$S$} to node [smallbox] {$D$} (0,2) node [above] {$S$};
\end{tikzpicture}
\end{aligned}
\hspace{5pt}\quad=\quad\hspace{5pt}
\hspace{-5pt}
\begin{aligned}
\begin{tikzpicture}[thick, scale=\licsscale]
\draw (0,0) node [below] {$S$} to (0,2) node [above] {$S$};
\end{tikzpicture}
\end{aligned}
\quad-\quad
\frac{2}{|S|}
\begin{aligned}
\begin{tikzpicture}[thick, scale=\licsscale]
\draw (0,0) node [below] {$S$} to (0,0.7) node [smallcircle] {};
\draw (0,2) node [above] {$S$} to (0,1.3) node [smallcircle] {};
\end{tikzpicture}
\end{aligned}
\end{equation}
This linear decomposition allows us to perform a completely topological analysis of the algorithm. By equation~\eqref{eq:copyonleg}, the linear map $\sigma ^\dag$ is duplicated by the group multiplication vertex $m$ in equation~\eqref{eq:grovertopological}, allowing us to rewrite it in the following way:
\begin{equation}
\begin{aligned}
\begin{tikzpicture}[thick, scale=\licsscale, scale=0.8]
\begin{pgfonlayer}{foreground}
    \node (dot) [smallcircle] at (0,1) {};
    \node (f) [smallbox, anchor=south, thick] at (0.7,2) {$f$};
    \node (m) [smallbox, thick] at (2.8,3.5) {$\sigma ^\dag$};
\end{pgfonlayer}
\draw (0,-0.25)
        node [smallcircle] (bdot) {}
    to (0,1)
    to [out=\nwangle, in=south] (-0.7,2)
    to ([yshift=1.4cm] m.center -| -0.7,1)
        node (rho) [smallbox] {$s ^\dag$};
\draw (0,1)
    to [out=\neangle, in=south] (f.south)
    to (f.north)
        to (0.7,3.5)
        node [smallbox] {$\sigma$};
\draw (m.center) to (2.8,5.75)
        node [above] {$\{0,1\}$};
\node [smallbox] at (-0.7,3.5) {$D$};
\node [anchor=east, smallfont] at ([xshift=4pt] m.west) {$\displaystyle \frac{1}{\sqrt{2}}$};
\node [anchor=east] at (bdot.west) {$\displaystyle \frac{1}{\sqrt{|S|}}$};
\end{tikzpicture}
\end{aligned}
\end{equation}
Since this is a product state, we can neglect the second system, rewriting the state of the first system using the topological decomposition~\eqref{eq:difftopological} of the diffusion operator $D$:
\begin{align}
\nonumber
{\textstyle\frac{1}{\sqrt{|S|}}}
\begin{aligned}
\begin{tikzpicture}[thick, scale=\licsscale, scale=0.8]
\begin{pgfonlayer}{foreground}
    \node (dot) [smallcircle] at (0,1) {};
    \node (f) [smallbox, anchor=south, thick] at (0.7,2) {$f$};
\end{pgfonlayer}
\draw (0,-0.25)
        node [smallcircle] (bdot) {}
    to (0,1)
    to [out=\nwangle, in=south] (-0.7,2)
    to (-0.7,3.5)
        node (rho) [smallbox] {$s ^\dag$};
\draw (0,1)
    to [out=\neangle, in=south] (f.south)
    to (f.north)
        to (0.7,3.5)
        node [smallbox] {$\sigma$};
\node [smallbox, anchor=south] at (-0.7,2) {$D$};
\end{tikzpicture}
\end{aligned}
\quad
&=
\quad
{\textstyle\frac{1}{\sqrt{|S|}}}
\left(\,\,
\begin{aligned}
\begin{tikzpicture}[thick, scale=\licsscale, scale=0.8]
\begin{pgfonlayer}{foreground}
    \node (dot) [smallcircle] at (0,1) {};
    \node (f) [smallbox, anchor=south, thick] at (0.7,2) {$f$};
\end{pgfonlayer}
\draw (0,-0.25)
        node [smallcircle] (bdot) {}
    to (0,1)
    to [out=\nwangle, in=south] (-0.7,2)
    to ([yshift=-0.0cm] m.center -| -0.7,1)
        node (rho) [smallbox] {$s ^\dag$};
\draw (0,1)
    to [out=\neangle, in=south] (f.south)
    to (f.north)
        to (0.7,3.5)
        node [smallbox] {$\sigma$};
\end{tikzpicture}
\end{aligned}
\hspace{3pt}-\hspace{3pt}
{\textstyle\frac{2}{|S|}}
\begin{aligned}
\begin{tikzpicture}[thick, scale=\licsscale, scale=0.8]
\begin{pgfonlayer}{foreground}
    \node (dot) [smallcircle] at (0,1) {};
    \node (f) [smallbox, anchor=south, thick] at (0.7,2) {$f$};
\end{pgfonlayer}
\draw (0,-0.25)
        node [smallcircle] (bdot) {}
    to (0,1)
    to [out=\nwangle, in=south] (-0.7,2) node [smallcircle] {};
\draw (-0.7,2.7) node [smallcircle] {}
    to ([yshift=0.00cm] m.center -| -0.7,1)
        node (rho) [smallbox] {$s ^\dag$};
\draw (0,1)
    to [out=\neangle, in=south] (f.south)
    to (f.north)
        to (0.7,3.5)
        node [smallbox] {$\sigma$};
\end{tikzpicture}
\end{aligned}
\,\,
\right)
\\
&= \quad
{\textstyle\frac{1}{\sqrt{|S|}}}
\left(
\,\,
\begin{aligned}
\begin{tikzpicture}[thick, scale=\licsscale, scale=0.8]
\draw (0,0) node [smallbox] {$s$} to (0,1.25) node [smallbox] {$f$} to (0,2.5) node [smallbox] {$\sigma$};
\end{tikzpicture}
\end{aligned}
\quad-\quad
{\textstyle\frac{2}{|S|}}
\begin{aligned}
\begin{tikzpicture}[thick, scale=\licsscale, scale=0.8]
\draw (0,0) node [smallcircle] {} to (0,1.25) node [smallbox] {$f$} to (0,2.5) node [smallbox] {$\sigma$};
\node [smallbox, draw=white, fill=none] at (0,0) {};
\draw (0,0) to (0,0.5);
\end{tikzpicture}
\end{aligned}
\,\,\right)
\end{align}
In the final equality we use properties of the topological algebra of finite sets, as described in Sections~\ref{sec:finitesets} and \ref{sec:functionsbetweensets}.

This is zero for those $s \in S$ satisfying the following equation:
\begin{equation}
\label{eq:zerocondition}
\begin{aligned}
\begin{tikzpicture}[thick, scale=\licsscale, scale=0.8]
\draw (0,0) node [smallbox] {$s$} to (0,1.25) node [smallbox] {$f$} to (0,2.5) node [smallbox] {$\sigma$};
\end{tikzpicture}
\end{aligned}
\quad=\quad
\frac{2}{|S|}
\begin{aligned}
\begin{tikzpicture}[thick, scale=\licsscale, scale=0.8]
\draw (0,0) node [smallcircle] {} to (0,1.25) node [smallbox] {$f$} to (0,2.5) node [smallbox] {$\sigma$};
\node [smallbox, draw=white, fill=none] at (0,0) {};
\draw (0,0) to (0,0.5);
\end{tikzpicture}
\end{aligned}
\end{equation}
Such an $s$ cannot be the result of a measurement. The right-hand side of this expression represents twice the average value of the function $f$ under the representation $\sigma$.
The left-hand side represents the value of the function $f$ at the element $s$, under the representation $\sigma$. If the right-hand side is equal to ${+}1$, it is straightforward to see that $f$ must take the value $1$ on three-quarters of the elements of $S$. If the right-hand side is equal to ${-}1$, then $f$ must take the value 1 on one-quarter of the elements of $S$.

This recovers the standard result from Grover search: to find a marked element with certainty after a single step, exactly a quarter of elements must be marked. Note that standard Grover theory assumes that the marked elements take the value 1 under the function $f$, but this is not required: the marked elements could just as well take the value 0, and the ummarked elements the value 1. It is clear that these two cases give rise to essentially the same procedure, since the difference corresponds to an overall phase. So one can argue that the traditional Grover algorithm should not be thought of as a search algorithm, but as a `frequency detection' algorithm, an idea which fits well with our generalized versions.

\subsection{Generalization}

\noindent
We generalize this by replacing the group $\{0,1\}$ with an arbitrary finite group $G$. Loosely, the generalized algorithm is guarantees to find an element $s \in S$ for which $\rho(f(s))$ does not take twice its average value, restricted to some chosen subspace of the representation space. The following diagram gives a topological outline of the algorithm:
\begin{equation}
\begin{aligned}
\begin{tikzpicture}[thick, scale=\licsscale]
\begin{pgfonlayer}{foreground}
    \node (dot) [smallcircle] at (0,1) {};
    \node (f) [smallbox, anchor=south, thick] at (0.7,2) {$f$};
    \node (m) [circlelabel, thick] at ([xshift=0.7cm, yshift=1cm] f.north) {$m$};
\end{pgfonlayer}
\draw (0,-0.25)
        node [smallcircle] (bdot) {}
    to (0,1)
    to [out=\nwangle, in=south] (-0.7,2)
    to ([yshift=1.4cm] m.center -| -0.7,1)
        node (rho) [smallbox] {$s ^\dag$};
\draw (0,1)
    to [out=\neangle, in=south] (f.south)
    to (f.north)
    to [out=up, in=\swangle] +(0.7,1)
    to [out=\seangle, in=up] +(0.7,-1)
    to (2.1,-0.25)
        node [smallbox] (sigmadag) {$\phi$};
\draw (m.center) to (1.4,5.75)
        node [above] {$G$};
\node [anchor=east] at (bdot.west) {$\displaystyle \frac{1}{\sqrt{|S|}}$};
\node [smallbox] at (-0.7,3.5) {$D$};
\node at (3.5,-0.25) [anchor=west] {\textit{Preparation}};
\draw [sepstyle] (-2,0.5) to (7,0.5);
\node at (3.5,2.25) [anchor=west] {\textit{Dynamics}};
\draw [sepstyle] (-2,4.25) to (7,4.25);
\node at (3.5,5) [anchor=west] {\textit{Measurement}};
\end{tikzpicture}
\end{aligned}
\end{equation}
The system $G$ is initialized in a state $\phi$ which is chosen in the following way, just as with the generalized Deutsch-Jozsa algorithm presented in Section~\ref{sec:gendeutschjozsa}:
\begin{equation}
\label{eq:phidef}
\begin{aligned}
\begin{tikzpicture}[thick, scale=\licsscale]
\begin{pgfonlayer}{foreground}
\draw [thick] (2.1,0.5) node [above] {$G$} to (2.1,-0.75)
    node [smallbox] {$\phi$};
\end{pgfonlayer}
\node at (2.1,-0.25) [smallbox, minimum width=17pt, white] (rhodag) {$\phi$};
\node [white] (prho) at ([yshift=-0.3cm] rhodag.-55) [anchor=north, smallbox, minimum width=13pt] {$P_\rho$};
\draw [white] ([yshift=-0.0cm] prho.south) to +(0,-0.25) to [out=down, in=down, looseness=2] ([yshift=-0.25cm] prho.south -| rhodag.-125);
\end{tikzpicture}
\end{aligned}
\quad:=\quad
\sum _\rho c_\rho \,\, 
\begin{aligned}
\begin{tikzpicture}[thick, scale=\licsscale]
\draw (2.1,0.5) node [above] {$G$} to (2.1,-0.25)
    node [smallbox, minimum width=17pt] (rhodag) {$\rho ^\dag$};
\node (prho) at ([yshift=-0.3cm] rhodag.-55) [anchor=north, smallbox, minimum width=13pt] {$P_\rho$};
\draw (rhodag.-55) to (prho.north);
\draw (prho.south) to +(0,-0.25) to [out=down, in=down, looseness=2] ([yshift=-0.25cm] prho.south -| rhodag.-125) to (rhodag.-125);
\end{tikzpicture}
\end{aligned}
\end{equation}

By following a similar argument to the previous section, we obtain the following expression for the basis elements $\ket s \in S$ which have zero amplitude to be measured:
\begin{equation}
\label{eq:generalizedzerocondition}
\begin{aligned}
\begin{tikzpicture}[thick, scale=\licsscale]
\draw (1.4,1.25) node [smallbox, anchor=north] {$\phi$} to [out=up, in=\seangle] (0.7,2.4);
\draw (0,0) node [smallbox, anchor=north] {$s$} to (0,1.25) node [smallbox, anchor=north] {$f$} to [out=up, in=\swangle] (0.7,2.4) node (m) [circlelabel] {$m$} to (0.7,3.4);
\end{tikzpicture}
\end{aligned}
\quad=\quad
\frac{2}{|S|}\,\,
\begin{aligned}
\begin{tikzpicture}[thick, scale=\licsscale]
\draw (1.4,1.25) node [smallbox, anchor=north] {$\phi$} to [out=up, in=\seangle] (0.7,2.4);
\draw (0,-0.2) node [smallcircle] {} to (0,1.25) node [smallbox, anchor=north] {$f$} to [out=up, in=\swangle] (0.7,2.4) node (m) [circlelabel] {$m$} to (0.7,3.4);
\node [smallbox, anchor=north, white] at (0.7,0.0) {};
\end{tikzpicture}
\end{aligned}
\end{equation}
From this equation, we see that the algorithm will never return those elements $s \in S$ such that, restricted the support of $m(-,\phi)$, the group element $f(s)$ takes twice its average value. We can apply definition~\eqref{eq:phidef} to see how $m(-,\phi)$ acts in terms of our chosen projectors $P_\rho$:
\def\tempxscale{1}
\begin{equation}
\sum_\rho c_\rho
\begin{aligned}
\begin{tikzpicture}[thick, xscale=\tempxscale, scale=\licsscale]
\begin{pgfonlayer}{foreground}
    \node (m) [circlelabel, thick] at (1.4,5) {$m$};
\end{pgfonlayer}
\draw (0.7,2) node [below] {$G$} to (0.7,4)
    to [out=up, in=\swangle] (m.center)
    to [out=\seangle, in=up] +(0.7,-1)
    to (2.1,4)
        node [smallbox, minimum width=17pt, anchor=north] (rhodag) {$\rho ^\dag$};
\node (prho) at ([yshift=-0.3cm] rhodag.-55) [anchor=north, smallbox, minimum width=13pt] {$P_\rho$};
\draw (m.center) to (1.4,6)
        node [above] {$G$};
\draw (rhodag.-55) to (prho.north);
\draw (prho.south) to [out=down, in=down, looseness=2] (prho.south -| rhodag.-125) to (rhodag.-125);
\end{tikzpicture}
\end{aligned}
\hspace{-5pt}
\quad=\quad
\sum_\rho c_\rho
\begin{aligned}
\begin{tikzpicture}[thick, xscale=-1, xscale=\tempxscale, scale=\licsscale]
\node (r2) [smallbox] at (1,-0.4) {$\rho$};
\draw (0.3, 0.9) node (r3) [smallbox] {$\rho ^\dag$};
\draw (r2.120) to [out=up, in=down, in looseness=0.7] (r3.-120);
\draw (r3.north) to (0.3,2) node [above] {$G$};
\draw (r2.south) to (1,-2) node [below] {$G$};
\begin{pgfonlayer}{foreground}
\node (prho) [anchor=south, smallbox, thick] at ([xshift=-1.4cm] r2.south) {$P_\rho$};
\end{pgfonlayer}
\draw (r2.60) to [out=up, in=up, looseness=1.2] ([xshift=0.5cm] r2.north -| prho.south) to ([xshift=0.5cm, yshift=-0.0cm] prho.south) to [out=down, in=down, looseness=2] (prho.south) to (prho.north |- r2.north) to [out=up, in=down] (r3.-60);
\end{tikzpicture}
\end{aligned}
\hspace{-5pt}
\quad=\quad
\sum_\rho c_\rho
\begin{aligned}
\begin{tikzpicture}[thick, xscale=-1, xscale=\tempxscale, scale=\licsscale]
\node (r2) [smallbox] at (1,-1.3) {$\rho$};
\draw (1, 0.9) node (r3) [smallbox] {$\rho ^\dag$};
\draw (r2.120) to [out=up, in=down, in looseness=0.7] (r3.-120);
\draw (r3.north) to (1,1.8) node [above] {$G$};
\draw (r2.south) to (1,-2.2) node [below] {$G$};
\begin{pgfonlayer}{foreground}
\node (prho) [anchor=center, smallbox, thick, minimum width=00pt, inner sep=1pt] at (r2.60 |- (0,-0.2) {$P_\rho$};
\end{pgfonlayer}
\draw (r2.60) to [out=up, in=down] (prho.south) to (prho.north) to [out=up, in=down] (r3.-60);
\end{tikzpicture}
\end{aligned}
\end{equation}
The support of this operator is the union of subspaces defined by our projectors $P_\rho$, across all the equivalence classes of irreducible representations.

Applying this to equation~\eqref{eq:generalizedzerocondition}, we see that a basis element $\ket s \in S$ cannot be the result of the measurement if it satisfies the following condition, which generalizes condition~\eqref{eq:zerocondition} above for a balanced element of $S$:
\begin{equation}
\forall \rho \quad\quad
\begin{aligned}
\begin{tikzpicture}[thick, xscale=-1, xscale=\tempxscale, scale=\licsscale]
\node (r2) [smallbox] at (1,-1.3) {$\rho$};
\draw (r2.120) to (r2.120 |- 0,0.7);
\draw (r2.south) to +(0,-1) node [smallbox, anchor=south] {$f$} to +(0,-2) node [smallbox, anchor=south] {$s$};
\begin{pgfonlayer}{foreground}
\node (prho) [anchor=south, smallbox, thick, minimum width=0pt, inner sep=1pt] at ([yshift=0.3cm] r2.60) {$P_\rho$};
\end{pgfonlayer}
\draw (r2.60) to [out=up, in=down] (prho.south) to (prho.north) to (prho.north |- 0,0.7);
\node [above] at (1,0.7) {\makebox[0pt]{$\Mat(\dim (\rho))$}};
\end{tikzpicture}
\end{aligned}
\quad=\quad
\frac{2}{|S|} \,\,
\begin{aligned}
\begin{tikzpicture}[thick, xscale=-1, xscale=\tempxscale, scale=\licsscale]
\node [smallbox, white] at (1,-3.32) {$s$};
\node (r2) [smallbox] at (1,-1.3) {$\rho$};
\draw (r2.120) to (r2.120 |- 0,0.7);
\draw (r2.south) to +(0,-1) node [smallbox, anchor=south] {$f$} to +(0,-1.7) node [smallcircle, anchor=south] {};
\begin{pgfonlayer}{foreground}
\node (prho) [anchor=south, smallbox, thick, minimum width=0pt, inner sep=1pt] at ([yshift=0.3cm] r2.60) {$P_\rho$};
\end{pgfonlayer}
\draw (r2.60) to [out=up, in=down] (prho.south) to (prho.north) to (prho.north |- 0,0.7);
\node [above] at (1,0.7) {\makebox[0pt]{$\Mat(\dim (\rho))$}};
\end{tikzpicture}
\end{aligned}
\end{equation}
This says that, under each representation $\rho$, restricted to the subspace $\id _{\dim(\rho)} \otimes P_\rho$, the group element $f(s)$ takes twice its average value.

\section{The hidden subgroup algorithm}
\label{sec:hiddensubgroup}

\subsection{Introduction}

\noindent
The hidden subgroup family of algorithms is rich, containing Deutsch's original algorithm, as well as Simon's algorithm and Shor's algorithm. We use our topological notation to prove correctness of the algorithm, and clarify its structure.

We are given a function $G \sxto f X$, promised to be constant on the cosets of some normal subgroup $H \subseteq G$, and distinct otherwise. This says exactly that the function $f$ factorizes as
\begin{equation}
\label{eq:hsfactorization}
\begin{aligned}
\begin{tikzpicture}
\node [anchor=east] (S) at (0,0) {$G$};
\node (1) at (1.2,0) {$G/H$};
\node [anchor=west] (G) at (2.4,0) {$S\,,$};
\draw [->>] (S) to node [auto, swap] {$q$} (1);
\draw [>->] (1) to node [auto, swap] {$s$} (G);
\draw [->] (S) to [out=45, in=135] node [auto] {$f$} (G);
\end{tikzpicture}
\end{aligned}
\end{equation}
where $q$ is a surjective projection onto the quotient group $G/H$, and $s$ is an embedding into some set. The algorithm determines the subgroup $H$ in $O(\log|G|)$ trials.

Traditionally applied only in the case where $G$ is abelian, it was extended to the case of normal subgroups of arbitrary finite groups by Hallgren, Russell and Ta-shma~\cite{hrt00-nsr}. It is this generalized version that we treat here. Our approach gives a clean separation between the group-theoretical and quantum aspects of the protocol, and yields a proof of correctness which is simpler, shorter, and easier to follow.

\subsection{Topological version}

\noindent
The following diagram summarizes our topological account of the algorithm.
\begin{equation}
\begin{aligned}
\begin{tikzpicture}[thick, scale=\licsscale]
\begin{pgfonlayer}{foreground}
    \node (dot) [smallcircle] at (0,1) {};
    \node (f) [smallbox, anchor=south, thick] at (0.7,2) {$f$};
    \node (m) [circlelabel, thick] at ([xshift=0.7cm, yshift=1cm] f.north) {$m$};
\end{pgfonlayer}
\draw (0,-0.25)
        node [smallcircle] (dot2) {}
    to (0,1)
    to [out=\nwangle, in=south] (-0.7,2)
    to ([yshift=1.4cm] m.center -| -0.7,1)
        node (rho) [smallbox] {$\rho$};
\node [above] at (-0.7,5.75) {$\Mat(n)$};
\draw (0,1)
    to [out=\neangle, in=south] (f.south)
    to (f.north)
    to [out=up, in=\swangle] +(0.7,1)
    to [out=\seangle, in=up] +(0.7,-1)
    to (2.1,-0.25)
        node [circlelabel] {$e$};
\draw (m.center) to (1.4,5.75)
        node [above] {$S\vphantom($};
\node [anchor=east, smallfont, inner sep=0pt] at (dot2.west) {$\displaystyle \frac{1}{\sqrt{|G|}}$};
\node at (3.5,-0.25) [anchor=west] {\textit{Preparation}};
\draw [sepstyle] (-2,0.5) to (7,0.5);
\node at (3.5,2.25) [anchor=west] {\textit{Dynamics}};
\draw [sepstyle] (-2,4.25) to (7,4.25);
\node at (3.5,5) [anchor=west] {\textit{Measurement}};
\node at (rho.west) [anchor=east, smallfont, inner sep=0pt] {$\displaystyle \sqrt{\frac{n}{|G|}}$};
\draw (rho.60) to (rho.60 |- 0,5.75);
\draw (rho.120) to (rho.120 |- 0,5.75);
\node [smallfont, anchor=west] at (0,0.2) {$G$};
\end{tikzpicture}
\end{aligned}
\end{equation}
We begin by preparing two systems: one with state space given by a group algebra $G$ in the uniform superposition state; and another with state space given by a set $S$ in some basis state $\ket e$. We then perform the unitary dynamics described by the second section, which as described in Section~\ref{sec:buildingtheunitary} recreates the action of the unitary operator $U_f$ in the conventional presentation. For this purpose we require a group structure on $S$; we choose any structure (such as a cyclic group structure) such that $e$ is the identity element.

The procedure finishes with a measurement on the first system in the partition defined by the irreducible representations, as described in Section~\ref{sec:repprojmeas}. Using a topological encoding of a result from group theory, we demonstrate that the only representations that can be successfully measured are those which arise as restrictions of representations of the quotient group $G/H$. Classical postprocessing then suffices to deduce the subgroup $H$ in $O(\log|G|)$ trials~\cite{hrt00-nsr}.

After the unitary dynamics step, the systems are in the following state:
\begin{equation}
\label{eq:hsafterunitary}
\hspace{-60pt}
\begin{aligned}
\begin{tikzpicture}[thick, scale=\licsscale, scale=0.8]
\begin{pgfonlayer}{foreground}
    \node (dot) [smallcircle] at (0,1) {};
    \node (f) [smallbox, anchor=south, thick] at (0.7,2) {$f$};
    \node (m) [circlelabel, thick] at ([xshift=0.7cm, yshift=1cm] f.north) {$m$};
\end{pgfonlayer}
\draw (0,0.25)
        node [smallcircle] (dot2) {}
    to (0,1)
    to [out=\nwangle, in=south] (-0.7,2)
    to (-0.7, 4.65)
        node [above] {$G$};
\draw (0,1)
    to [out=\neangle, in=south] (f.south)
    to (f.north)
    to [out=up, in=\swangle] +(0.7,1)
    to [out=\seangle, in=up] +(0.7,-1)
    to (2.1,0.25)
        node [circlelabel] {$e$};
\draw (m.center) to (1.4,4.65)
        node [above] {$S$};
\node [anchor=east, smallfont, inner sep=0pt] at (dot2.west) {$\displaystyle \frac{1}{\sqrt{|G|}}$};
\end{tikzpicture}
\end{aligned}
\hspace{5pt}=\hspace{-3pt}
\begin{aligned}
\begin{tikzpicture}[thick, scale=\licsscale, scale=0.8]
\begin{pgfonlayer}{foreground}
    \node (dot) [smallcircle] at (0,1) {};
    \node (f) [smallbox, anchor=south, thick] at (0.7,2.5) {$f$};
\end{pgfonlayer}
\draw (0,0.25)
        node (dot2) [smallcircle] {}
    to (0,1)
    to [out=\nwangle, in=south] (-0.7,2)
    to (-0.7, 4.65)
        node [above] {$G$};
\draw (0,1)
    to [out=\neangle, in=south] (0.7,2)
    to (f.north)
    to (0.7,4.65) node [above] {$S$};
\node [anchor=east, smallfont, inner sep=0pt] at (dot2.west) {$\displaystyle \frac{1}{\sqrt{|G|}}$};
\end{tikzpicture}
\end{aligned}
\hspace{5pt}=\hspace{-3pt}
\begin{aligned}
\begin{tikzpicture}[thick, scale=\licsscale, scale=0.8]
\begin{pgfonlayer}{foreground}
    \node (dot) [smallcircle] at (0,1) {};
    \node (pi) [smallbox, anchor=south, thick] at (0.7,2) {$q$};
    \node (s) [smallbox, anchor=south, thick] at (0.7,3.2) {$s$};
\end{pgfonlayer}
\draw (0,0.25)
        node (dot2) [smallcircle] {}
    to (0,1)
    to [out=\nwangle, in=south] (-0.7,2)
    to (-0.7, 4.65)
        node [above] {$G$};
\draw (0,1)
    to [out=\neangle, in=south] (pi.south)
    to (0.7,4.65) node [above] {$S$};
\node [anchor=east, smallfont, inner sep=0pt] at (dot2.west) {$\displaystyle \frac{1}{\sqrt{|G|}}$};
\node at (0.7,2.88) [smallfont, anchor=west] {};
\end{tikzpicture}
\end{aligned}
\hspace{-50pt}
\end{equation}
These are topological representations of the following algebraic expression:
\begin{equation}
\frac{1}{\sqrt{|G|}} \sum _{g \in G} \ket g \otimes \ket {f(g)}
\end{equation}
On the right-hand side of~\eqref{eq:hsafterunitary} we have rewritten $f$ in terms of its promised factorization~\eqref{eq:hsfactorization} via an unknown quotient group~$G/H$.

\subsection{Performing the measurement}

\noindent
We now describe measurement of the first system $G$, using a projective measurement determined by the irreducible representations. To understand what results are possible, we compose the state~\eqref{eq:hsafterunitary} with the linear map
\begin{equation}
\begin{aligned}
\begin{tikzpicture}[thick, scale=\licsscale]
\draw (0,0) node [below] {$G$} to (0,1);
\node at (0,2) [above] {$\Mat(n)$};
\node (tau) [smallbox] at (0,1) {$\rho$};
\node [anchor=east, smallfont] at (tau.west) {$\displaystyle \sqrt{\frac{n}{|G|}}$};
\draw (tau.60) to (tau.60 |- 0,2);
\draw (tau.120) to (tau.120 |- 0,2);
\end{tikzpicture}
\end{aligned}
\end{equation}
on the first factor, which as discussed in Section~\ref{sec:repprojmeas} is a partial isometry projecting onto the subspace of the group algebra corresponding to the irreducible representation~$\rho$. This gives the following result:
\begin{equation}
\label{eq:hsprojectedstate}
\frac{\sqrt{n}}{|G|}
\hspace{-10pt}
\begin{aligned}
\begin{tikzpicture}[thick, scale=\licsscale, scale=0.8]
\begin{pgfonlayer}{foreground}
    \node (dot) [smallcircle] at (0,1) {};
    \node (pi) [smallbox, anchor=south, thick] at (0.7,2) {$q$};
    \node (s) [smallbox, anchor=south, thick] at (0.7,3) {$s$};
\end{pgfonlayer}
\draw (0,0.25)
        node (dot2) [smallcircle] {}
    to (0,1)
    to [out=\nwangle, in=south] (-0.7,2)
    to (-0.7, 4.0);
\node [above] at (-0.7,5.2) {$\Mat(n)$};
\draw (0,1)
    to [out=\neangle, in=south] (pi.south)
    to (0.7,5.2) node [above] {$S\vphantom{(}$};
\node (rho) at (-0.7,4) [smallbox, anchor=south] {$\rho$};
\draw (rho.60) to (rho.60 |- 0,5.2);
\draw (rho.120) to (rho.120 |- 0,5.2);
\end{tikzpicture}
\end{aligned}
\quad=\quad
\frac{\sqrt{n}}{|G|}
\hspace{-5pt}
\begin{aligned}
\begin{tikzpicture}[thick, scale=\licsscale, scale=0.8]
\begin{pgfonlayer}{foreground}
    \node (dot) [smallcircle] at (0,1) {};
    \node (pi) [smallbox, anchor=south, thick] at (0.7,2) {$q$};
    \node (s) [smallbox, anchor=south, thick] at (0.7,3) {$s$};
\end{pgfonlayer}
\draw (0,0.25)
        node (dot2) [smallcircle] {}
    to (0,1)
    to [out=\nwangle, in=south] (-0.7,2);
\node at (-0.7,5.2) [above] {$\Mat(n)$};
\draw (0,1)
    to [out=\neangle, in=south] (pi.south)
    to (0.7,5.2) node [above] {$S\vphantom{(}$};
\node (rho) at (-0.7,2) [smallbox, anchor=south] {$\rho$};
\draw (rho.60) to (rho.60 |- 0,5.2);
\draw (rho.120) to (rho.120 |- 0,5.2);
\end{tikzpicture}
\end{aligned}
\end{equation}
By Theorem~\ref{thm:graphicalnormalsubgroup}, this is zero exactly when $\rho$ does not factor through the unknown quotient group $G/H$. So the only irreducible representations that can be measured in this way are those which factorize as $G \sxto q G/H  \sxto \tau \Mat(n)$, for some irreducible representation $\tau$ of $G/H$. In this case, our projected state~\eqref{eq:hsprojectedstate} can be rewritten as follows, making use of the topological properties of functions between basis elements and even surjections as presented in Section~\ref{sec:functionsbetweensets}:
\begin{equation}
\hspace{-60pt}
{\textstyle\frac{\sqrt{n}}{|G|}}
\begin{aligned}
\begin{tikzpicture}[thick, scale=\licsscale]
\begin{pgfonlayer}{foreground}
    \node (dot) [smallcircle] at (0,1) {};
    \node (pi) [smallbox, anchor=south, thick] at (0.7,2) {$q$};
    \node (s) [smallbox, anchor=south, thick] at (0.7,3.2) {$s$};
\end{pgfonlayer}
\draw (0,0.25)
        node (dot2) [smallcircle] {}
    to (0,1)
    to [out=\nwangle, in=south] (-0.7,2)
    to (-0.7, 3.55);
\node at (-0.7, 4.45) [above] {\makebox[0pt]{$\Mat(n)$}};
\draw (0,1)
    to [out=\neangle, in=south] (pi.south)
    to (0.7,4.45) node [above] {$S\vphantom{(}$};
\node at (-0.7,2.25) [smallbox] {$q$};
\node (tau) at (-0.7,3.2) [smallbox, anchor=south] {$\tau$};
\node at (-0.75,2.90) [smallfont, anchor=west] {$G/H$};
\node at (0.65,2.90) [smallfont, anchor=west] {\makebox[0pt][l]{$G/H$}};
\node at (0.0,0.625) [smallfont, anchor=west] {\makebox[0pt][l]{$G$}};
\draw (tau.60) to (tau.60 |- 0,4.45);
\draw (tau.120) to (tau.120 |- 0,4.45);
\end{tikzpicture}
\end{aligned}
\quad=\hspace{3pt}
{{\textstyle\frac{\sqrt{n}}{|G|}}
\begin{aligned}
\begin{tikzpicture}[thick, scale=\licsscale]
\begin{pgfonlayer}{foreground}
    \node (dot) [smallcircle] at (0,1.75) {};
    \node (s) [smallbox, anchor=south, thick] at (0.7,3.2) {$s$};
\end{pgfonlayer}
\draw (0,0.25)
        node (dot2) [smallcircle] {}
    to (dot.center)
    to [out=\nwangle, in=south] +(-0.7,1)
    to (-0.7, 3.45);
\node at (-0.7, 4.45) [above] {\makebox[0pt]{$\Mat(n)$}};
\draw (dot.center)
    to [out=\neangle, in=south] +(0.7,1)
    to (0.7,4.45) node [above] {$S\vphantom{(}$};
\node (tau) at (-0.7,3.2) [smallbox, anchor=south] {$\tau$};
\node at (0,1.0) [smallbox] {$q$};
\node at (0.0,0.5) [smallfont, anchor=west] {\makebox[0pt][l]{$G$}};
\node at (0.0,1.5) [smallfont, anchor=west] {\makebox[0pt][l]{$G/H$}};
\draw (tau.60) to (tau.60 |- 0,4.45);
\draw (tau.120) to (tau.120 |- 0,4.45);
\end{tikzpicture}
\end{aligned}
\hspace{0pt}=\hspace{3pt}
{\textstyle{\scriptstyle\sqrt{n}} \frac{|H|}{|G|}}}
\begin{aligned}
\begin{tikzpicture}[thick, scale=\licsscale]
\begin{pgfonlayer}{foreground}
    \node (dot) [smallcircle] at (0,1.75) {};
    \node (s) [smallbox, anchor=south, thick] at (0.7,3.2) {$s$};
\end{pgfonlayer}
\draw (0,0.75)
        node (dot2) [smallcircle] {}
    to (dot.center)
    to [out=\nwangle, in=south] +(-0.7,1)
    to (-0.7, 3.35);
\node at (-0.7, 4.45) [above] {\makebox[0pt]{$\Mat(n)$}};
\draw (dot.center)
    to [out=\neangle, in=south] +(0.7,1)
    to (0.7,4.45) node [above] {$S\vphantom{(}$};
\node (tau) at (-0.7,3.2) [smallbox, anchor=south] {$\tau$};
\node [smallcircle, fill=white, white] at (0,0.25) {};
\node at (0.0,1.25) [smallfont, anchor=west] {\makebox[0pt][l]{$G/H$}};
\draw (tau.60) to (tau.60 |- 0,4.45);
\draw (tau.120) to (tau.120 |- 0,4.45);
\end{tikzpicture}
\end{aligned}
\hspace{-50pt}
\end{equation}
We calculate the norm of this vector in the following way, employing the graphical characterization of injective functions given in Section~\ref{sec:functionsbetweensets}, and both the cyclic property of the trace and the topological expression of the dimension of a vector space as developed in Section~\ref{sec:trace}:
\allowdisplaybreaks[1]
{\begin{calign}
\nonumber
n \frac{|H| ^2}{|G| ^2}
\hspace{4pt}
\begin{aligned}
\begin{tikzpicture}[thick, scale=\licsscale, xscale=0.8, yscale=0.9]
\draw (0,0.25) node [smallcircle] {} to (0,1) node [smallcircle] {} to [out=\nwangle, in=down] (-0.7,2) node (t) [anchor=south, smallbox] {$\tau$};
\draw (-0.7,3.75) node (tdag) [anchor=north, smallbox] {$\tau ^\dag$} to [out=up, in=\swangle] (0,4.75) node [smallcircle] {} to (0,5.5) node [smallcircle] {};
\draw (0,1) to [out=\neangle, in=down] (0.7,2) node [anchor=south, smallbox] {$s$} to (0.7,3.75) node [anchor=north, smallbox] {$s ^\dag$} to [out=up, in=\seangle] (0,4.75);
\node at (0,0.625) [anchor=west, smallfont] {$G/H$};
\node at (0,5.12) [anchor=west, smallfont] {$G/H$};
\draw (t.60) to (tdag.-60);
\draw (t.120) to (tdag.-120);
\end{tikzpicture}
\end{aligned}
\quad=\quad
n \frac{|H| ^2}{|G| ^2}
\hspace{4pt}
\begin{aligned}
\begin{tikzpicture}[thick, scale=\licsscale, xscale=0.8, yscale=0.9]
\draw (0,0.25) node [smallcircle] {} to (0,1) node [smallcircle] {} to [out=\nwangle, in=down] (-0.7,2) node (t) [anchor=south, smallbox] {$\tau$};
\draw (-0.7,3.75) node (tdag) [anchor=north, smallbox] {$\tau ^\dag$} to [out=up, in=\swangle] (0,4.75) node [smallcircle] {} to (0,5.5) node [smallcircle] {};
\draw (0,1) to [out=\neangle, in=down] (0.7,2) to (0.7,3.75) to [out=up, in=\seangle] (0,4.75);
\node at (0,0.625) [anchor=west, smallfont] {$G/H$};
\node at (0,5.12) [anchor=west, smallfont] {$G/H$};
\draw (t.60) to (tdag.-60);
\draw (t.120) to (tdag.-120);
\end{tikzpicture}
\end{aligned}
\\
=\hspace{5pt}
n \frac{|H| ^2}{|G| ^2}
\hspace{0pt}
\begin{aligned}
\begin{tikzpicture}[thick, scale=\licsscale, xscale=0.8, yscale=0.9]
\draw (0,0.25) node [smallcircle] {} to (0,1) node [smallcircle] {} to [out=\nwangle, in=down] (-0.7,2) node [anchor=south, smallbox, white] {$\tau ^\dag$};
\draw (-0.7,3.75) node [anchor=north, smallbox, white] {$\tau$} to [out=up, in=\swangle] (0,4.75) node [smallcircle] {} to (0,5.5) node [smallcircle] {};
\draw (0,1) to [out=\neangle, in=down] (0.7,2) to (0.7,3.75) to [out=up, in=\seangle] (0,4.75);
\begin{scope}[xshift=10pt]
\draw (0,0.25) node [smallcircle] {} to (0,1) node [smallcircle] {} to [out=\nwangle, in=down] (-0.7,2);
\draw (-0.7,3.75) to [out=up, in=\swangle] (0,4.75) node [smallcircle] {} to (0,5.5) node [smallcircle] {};
\draw (0,1) to [out=\neangle, in=down] (0.7,2) to (0.7,3.75) to [out=up, in=\seangle] (0,4.75);
\node at (0,0.625) [anchor=west, smallfont] {$\Mat(n)$};
\node at (0,5.12) [anchor=west, smallfont] {$\Mat(n)$};
\end{scope}
\node (tdag) [anchor=south, smallbox] at ([xshift=5pt] -0.7,2) {$\tau ^\dag$};
\node (t) [anchor=north, smallbox] at ([xshift=5pt] -0.7,3.75) {$\tau$};
\draw (t.south) to (tdag.north);
\end{tikzpicture}
\end{aligned}
\hspace{-13pt}=\hspace{5pt}
 \frac{|H|}{|G|}
\hspace{4pt}
\begin{aligned}
\begin{tikzpicture}[thick, smallfont, scale=\licsscale, xscale=0.8, yscale=0.9]
\draw (0,0.25) node [smallcircle] {} to (0,1) node [smallcircle] {} to [out=\nwangle, in=down] (-0.7,2) to (-0.7,3.75) to [out=up, in=\swangle] (0,4.75) node [smallcircle] {} to (0,5.5) node [smallcircle] {};
\draw (0,1) to [out=\neangle, in=down] (0.7,2) to (0.7,3.75) to [out=up, in=\seangle] (0,4.75);
\begin{scope}[xshift=10pt]
\draw (0,0.25) node [smallcircle] {} to (0,1) node [smallcircle] {} to [out=\nwangle, in=down] (-0.7,2) to (-0.7,3.75) to [out=up, in=\swangle] (0,4.75) node [smallcircle] {} to (0,5.5) node [smallcircle] {};
\draw (0,1) to [out=\neangle, in=down] (0.7,2) to (0.7,3.75) to [out=up, in=\seangle] (0,4.75);
\node at (0,0.625) [anchor=west, smallfont] {$\Mat(n)$};
\node at (0,5.12) [anchor=west, smallfont] {$\Mat(n)$};
\end{scope}
\end{tikzpicture}
\end{aligned}
\hspace{-13pt}=\hspace{5pt}
\frac{|H|}{|G|} n^2
\end{calign}}So the measurement will return exactly those irreducible representations of $G$ that factor through the unknown quotient group $G/H$, with a probability proportional to the square of the dimension of the representation.

\enlargethispage{-1.8in}
\bibliographystyle{plain}
\bibliography{../../../jov}

\newpage
\appendix

\section{Topological algebra}
\label{sec:topologicalalgebra}

\subsection{Introduction}

\noindent
Here we present a topological notation for linear algebra, and show how it gives a powerful way to display various properties of finite groups and finite sets. Built on early work of Penrose~\cite{p71-andt}, these techniques have now found wide use in quantum foundations and quantum information~\cite{bs10-rosetta,c03-loe,k04-fa2d}. A mathematical foundation for the notation is given by category theory~\cite{kl80-cccc,s11-sgl}, but this will remain entirely beneath the surface here. This material is suitable for an appendix since all of it can be found in other places in the literature, but in only a scattered fashion.

We begin by defining the notation, then demonstrate how we can use it to represent theorems about finite sets and finite groups in a topological way. These will form the heart of our topological proofs of correctness for the Deutsch-Jozsa, hidden subgroup and single-shot Grover algorithms.

\subsection{Linear algebra}

In the topological notation, diagrams represent linear maps between finite-dimensional Hilbert spaces. A vertically-oriented wire represents the identity map on a finite-dimensional Hilbert space, which we will often label with its name:
\begin{equation}
\begin{aligned}
\begin{tikzpicture}[thick, scale=\licsscale]
\draw (0,0) node [below] {$H$} to (0,1.5);
\end{tikzpicture}
\end{aligned}
\end{equation}
Nontrivial linear maps are represented by boxes, or sometimes simply by a vertex, with its domain represented by the input wires at the bottom, and its codomain represented by output wires at the top. So we draw the following diagram to represent a linear map of type $H \sxto p J$:
\begin{equation}
\begin{aligned}
\begin{tikzpicture}[thick, scale=\licsscale]
\node (f) at (0, 0) [smallbox] {$p$};
\draw (f.south -| 0,0) to (0,-0.75) node [below] {$H$};
\draw (f.north -| 0,0) to (0,+0.75) node [above] {$J$};
\end{tikzpicture}
\end{aligned}
\end{equation}
Horizontal juxtaposition of diagrams represents tensor product of linear maps, and vertical juxtaposition represents composition of linear maps. So, for example, given maps $H \otimes J \sxto q K$ and $M \sxto r J \otimes L$
\begin{calign}
\begin{aligned}
\begin{tikzpicture}[thick, scale=\licsscale]
\node (f) at (0, 0) [smallbox, minimum width=1.1cm, minimum height = 0.45cm] {$q$};
\draw (f.south -| -0.5, 0) to +(0,-0.55) node [below] {$H$};
\draw (f.south -| 0.5, 0) to +(0,-0.55) node [below] {$J$};
\draw (f.north -| 0,0) to +(0,0.55) node [above] {$K$};
\end{tikzpicture}
\end{aligned}
&
\begin{aligned}
\begin{tikzpicture}[thick, scale=\licsscale]
\node (f) at (0, 0) [smallbox, minimum width=1.1cm, minimum height = 0.45cm] {$r$};
\draw (f.north -| -0.5, 0) to +(0,0.55) node [above] {$J$};
\draw (f.north -| 0.5, 0) to +(0,0.55) node [above] {$L$};
\draw (f.south -| 0,0) to +(0,-0.55) node [below] {$M$};
\end{tikzpicture}
\end{aligned}
\end{calign}
we represent the composite $(q \otimes \id_L) \circ (\id_H \otimes r)$ with the following diagram:
\begin{equation}
\begin{aligned}
\begin{tikzpicture}[thick, scale=\licsscale]
\node (g) at (0, 0) [smallbox, minimum width=1.1cm, minimum height = 0.45cm] {$q$};
\node (h) at ([yshift=-0.75cm] g.south -| 1,0) [anchor=north, smallbox, minimum width=1.1cm, minimum height = 0.45cm] {$r$};
\draw (0.5,0 |- g.south) to node [right] {$J$} (0.5,0 |- h.north);
\draw (0,0 |- g.north) to +(0,0.55) node [above] {$K$};
\draw (1.5,0 |- h.north) to ([yshift=0.55cm] g.north -| 1.5,0) node [above] {$L$};
\draw (1,0 |- h.south) to +(0,-0.55) node [below] {$M$};
\draw (-0.5,0 |- g.south) to ([yshift=-0.55cm] h.south -| -0.5,0) node [below] {$H$};
\end{tikzpicture}
\end{aligned}
\end{equation}
The identity on the 1\-dimensional Hilbert space \C{} is represented as the empty diagram:
\begin{equation}
\begin{aligned}\rule{0pt}{30pt}\end{aligned}
\end{equation}
Juxtaposing this with any diagram leaves the diagram unchanged, which is consistent with its mathematical role as the unit for the tensor product operation.

Nontrivial algebraic interactions between the tensor product and composite of linear maps are made transparent by the formalism. In particular, for a family of maps $H \sxto s J$, $J \sxto t K$, $L \sxto u M$ and $M \sxto v N$, the equal algebraic composites $(t \otimes v) \circ (s \otimes u)$ and $(t \circ s) \otimes (v \circ u)$ have the same graphical representation:
\begin{equation}
\begin{aligned}
\begin{tikzpicture}[thick, scale=\licsscale]
\draw (0,0) node [below] {$H$} to (0,0.55) node (s) [anchor=south, smallbox] {$s$} to (s.north) to node [auto] {$J$} +(0,0.75) node (t) [anchor=south, smallbox] {$t$};
\draw (t.north) to +(0,0.55) node [above] {$K$};
\begin{scope}[xshift=1.5cm]
\draw (0,0) node [below] {$L$} to (0,0.55) node (s) [anchor=south, smallbox] {$u$} to (s.north) to node [auto, swap] {$M$} +(0,0.75) node (t) [anchor=south, smallbox] {$v$};
\draw (t.north) to +(0,0.55) node [above] {$N$};
\end{scope}
\end{tikzpicture}
\end{aligned}
\end{equation}
The formalism also allows us to change the relative heights of boxes and move components around, as long as the connectivity between the different boxes is maintained, without changing the value of the diagram. For example, the following diagrams represent equal linear maps:
\begin{equation}
\begin{aligned}
\begin{tikzpicture}[thick, scale=\licsscale, scale=0.8]
\node (w) [smallbox] at (-0.7,1.5) {$w$};
\node (x) [smallbox] at (0.7,0.5) {$x$};
\draw (0,-1) node [below] {$H$} to [out=up, in=down] (-0.7,0.5) to (w.south);
\draw (0,3) node [above] {$J$} to [out=down, in=up] (0.7,1.5) to (x.north);
\end{tikzpicture}
\end{aligned}
\quad=\quad
\begin{aligned}
\begin{tikzpicture}[thick, scale=\licsscale, scale=0.8]
\node (w) [smallbox] at (0,0.2) {$w$};
\node (x) [smallbox] at (0,1.8) {$x$};
\draw (0,-1) node [below] {$H$} to (w.south);
\draw (0,3) node [above] {$J$} to (x.north);
\end{tikzpicture}
\end{aligned}
\quad=\quad
\begin{aligned}
\begin{tikzpicture}[thick, xscale=-1, scale=\licsscale, scale=0.8]
\node (w) [smallbox] at (-0.7,1.5) {$w$};
\node (x) [smallbox] at (0.7,0.5) {$x$};
\draw (0,-1) node [below] {$H$} to [out=up, in=down] (-0.7,0.5) to (w.south);
\draw (0,3) node [above] {$J$} to [out=down, in=up] (0.7,1.5) to (x.north);
\end{tikzpicture}
\end{aligned}
\end{equation}

\subsection{Finite sets}
\label{sec:finitesets}

\noindent
Given a finite set $S$, we can form the free complex vector space~$\C[S]$, with a canonical basis in bijection with the elements of $S$. We equip this with an inner product such that the basis elements are orthonormal, giving rise to a Hilbert space. To reduce the weight of our notation we allow the symbol $S$ to also stand for this Hilbert space, and write $\ket s$ to represent the basis element corresponding to the element $s$ of the original set. The vector space \mbox{S} carries a commutative algebra structure, defined by the maps \mbox{$S \otimes S \sxto m S$} and \mbox{$\C \sxto u S$}:
\begin{align}
\label{eq:basismultiply}
m \big( \ket i \otimes \ket j \big) :=& \delta_{i,j} \ket i
\\
\label{eq:basisunit}
u :=& \textstyle \sum _i \ket i
\end{align}
We represent these graphically in the following way.
\begin{calign}
\begin{aligned}
\begin{tikzpicture}[thick, scale=\licsscale]
\draw (-0.7,-1) to [out=up, in=\swangle] (0,0);
\draw (0.7,-1) to [out=up, in=\seangle] (0,0);
\draw (0,0) to (0,1);
\node (m) at (0,0) [smallcircle] {};
\end{tikzpicture}
\end{aligned}
&
\begin{aligned}
\begin{tikzpicture}[thick, scale=\licsscale]
\draw [white] (0,-1) to (0,1);
\draw (0,0) to (0,1);
\node (m) at (0,0) [smallcircle] {};
\end{tikzpicture}
\end{aligned}
\\
\nonumber
S \otimes S \sxto m S
&
\C \sxto u S
\end{calign}
Since these are linear maps between Hilbert spaces we can also construct their adjoints, which we denote as follows:
\begin{calign}
\label{eq:copydelete}
\begin{aligned}
\begin{tikzpicture}[thick, yscale=-1, scale=\licsscale]
\draw (-0.7,-1) to [out=up, in=\swangle] (0,0);
\draw (0.7,-1) to [out=up, in=\seangle] (0,0);
\draw (0,0) to (0,1);
\node (m) at (0,0) [smallcircle] {};
\end{tikzpicture}
\end{aligned}
&
\begin{aligned}
\begin{tikzpicture}[thick, yscale=-1, scale=\licsscale]
\draw [white] (0,-1) to (0,1);
\draw (0,0) to (0,1);
\node (m) at (0,0) [smallcircle] {};
\end{tikzpicture}
\end{aligned}
\\
\nonumber
S \sxto {m ^\dag} S \otimes S
&
S \sxto {u^\dag} \C
\end{calign}
The maps $m ^\dag$ and $u^\dag$ provide a canonical way to `copy' and `erase' elements of $S$, in a way which is governed by the canonical basis.

The relationships between composites of these are governed by a strong theorem~\cite{cpv08-dfb,k04-fa2d,rsw04-gcsa}.
\begin{theorem}
\label{thm:spider}
Given a commutative algebra on a finite-dimensional Hilbert space $H$, the following are equivalent:
\begin{enumerate}
\item Any two composites of the multiplication, the unit and their adjoints, having the same inputs and outputs, are equal iff they have the same connectivity.
\item The algebra arises in the manner of equations \eqref{eq:basismultiply} and \eqref{eq:basisunit} from an orthonormal basis of $H$.
\end{enumerate}
\end{theorem}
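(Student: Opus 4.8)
The statement bundles a normal-form (``spider'') theorem together with a classification, so I would prove the two implications separately, each time passing through the notion of a \emph{$\dagger$-special commutative Frobenius algebra}: a commutative algebra $(m,u)$ on $H$ satisfying the Frobenius law $(m\otimes\id)\circ(\id\otimes m^\dagger)=m^\dagger\circ m=(\id\otimes m)\circ(m^\dagger\otimes\id)$ together with speciality $m\circ m^\dagger=\id_H$. Throughout, composites are formed in the symmetric monoidal category of finite-dimensional Hilbert spaces, so braidings are available and do not affect the connectivity of a diagram.

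For $(2)\Rightarrow(1)$ I would proceed in two steps. First, a short computation on basis elements using \eqref{eq:basismultiply} and \eqref{eq:basisunit} --- which gives $m^\dagger\ket i=\ket i\otimes\ket i$ and $u^\dagger\ket i=1$ --- shows that the copying algebra is a $\dagger$-special commutative Frobenius algebra. Second, I would prove the spider theorem for any such algebra: viewing a composite of $m,u,m^\dagger,u^\dagger$ and identities as a string diagram, one uses the Frobenius law to slide every $m$ below every $m^\dagger$, associativity and commutativity to fuse the $m$'s of each connected component into a single fan-in $m_p\colon H^{\otimes p}\to H$ and the $m^\dagger$'s into a single fan-out $m^\dagger_q\colon H\to H^{\otimes q}$ (with $m_0=u$, $m_1=\id$, and dually), the unit laws to absorb leftover copies of $u$ and $u^\dagger$, and speciality $m\circ m^\dagger=\id_H$ to delete internal loops, leaving each component as the canonical spider $S_{p,q}=m^\dagger_q\circ m_p$. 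Since this form depends only on the numbers of inputs $p$ and outputs $q$ of the component, two composites with the same connectivity are equal. For the converse half of (1) I would just write the spiders in the basis: $S_{p,q}$ sends $\ket{i_1}\otimes\cdots\otimes\ket{i_p}$ to $\ket{i}^{\otimes q}$ when all $i_k=i$ and to $0$ otherwise, and a tensor product of such maps over several components is determined by, and determines, that component structure as soon as $H$ is non-trivial.

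For $(1)\Rightarrow(2)$ I would first note that each defining equation of a $\dagger$-special commutative Frobenius algebra --- associativity, commutativity, unitality, the Frobenius law, and speciality --- is an identity between two composites of $m$, $u$ and their adjoints having the same inputs, the same outputs, and the same connectivity, so hypothesis (1) forces all of them and $(H,m,u)$ is such an algebra. It then remains to invoke its classification \cite{cpv08-dfb}: the Frobenius pairing $u^\dagger\circ m$ together with the Hilbert-space $\dagger$ equips $(H,m,u)$ with an involution making it a finite-dimensional commutative $C^*$-algebra, hence $H\cong\C^n$ as an algebra by Gelfand--Naimark. Its minimal idempotents $\ket1,\dots,\ket n$ satisfy $m(\ket i\otimes\ket j)=\delta_{ij}\ket i$ and $\sum_i\ket i=u$ by construction, and one last use of the Frobenius law and speciality shows they are pairwise orthogonal of unit norm with $m^\dagger\ket i=\ket i\otimes\ket i$ and $u^\dagger\ket i=1$. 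This exhibits $(m,u)$ in the form \eqref{eq:basismultiply}--\eqref{eq:basisunit} for the orthonormal basis $\{\ket i\}$, closing the equivalence.

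The main obstacle lies in the two reductions just sketched. For $(2)\Rightarrow(1)$ it is making the collapse to spider normal form fully rigorous --- an induction on the number of vertices in which one must carefully track units, counits and loops so that only $(p,q)$ and the component structure survive; this is the combinatorial content of references such as \cite{k04-fa2d,rsw04-gcsa}. For $(1)\Rightarrow(2)$ the crux is the passage from $\dagger$-speciality to semisimplicity (no nonzero nilpotents, equivalently the $C^*$-structure), which is the one genuinely analytic input and the reason the Hilbert-space $\dagger$ cannot be dispensed with.
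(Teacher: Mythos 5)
Your proposal is correct and follows essentially the route the paper itself indicates: the paper does not prove Theorem~\ref{thm:spider} but delegates it to the cited literature, noting that the equivalent properties amount to a classical structure (commutative $\dagger$\-special Frobenius algebra), with the Spider Theorem giving property~1 from the axioms and the classification of such algebras by orthonormal bases giving property~2 --- exactly the two reductions you sketch via the normal-form argument and the $C^*$/Gelfand classification. The only caveat, already present in the paper's statement and which you partially flag with ``$H$ non-trivial,'' is that the ``only if'' half of property~1 needs $\dim H \geq 2$.
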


\noindent
In particular, the following equalities are implied by these equivalent properties. Together these form the definition of a \emph{classical structure}, also known as a \emph{special \dag-Frobenius algebra}:
\def\frobscale{0.5}
\begin{calign}
\begin{aligned}
\begin{tikzpicture}[thick, scale=\frobscale]
\draw (0,0) to [out=up, in=\swangle] (0.7, 1);
\draw (1.4,0) to [out=up, in=\seangle] (0.7,1);
\draw (2.8,0) to [out=up, in=\seangle] (1.55,2);
\draw (0.7,1) to [out=up, in=\swangle] (1.55, 2);
\draw (1.55,2) to (1.55, 2.72);
\node [smallcircle] at (0.7,1) {};
\node [smallcircle] at (1.55,2) {};
\end{tikzpicture}
\end{aligned}
\quad=\quad
\begin{aligned}
\begin{tikzpicture}[xscale=-1, thick, scale=\frobscale]
\draw (0,0) to [out=up, in=\swangle] (0.7, 1);
\draw (1.4,0) to [out=up, in=\seangle] (0.7,1);
\draw (2.8,0) to [out=up, in=\seangle] (1.55,2);
\draw (0.7,1) to [out=up, in=\swangle] (1.55, 2);
\draw (1.55,2) to (1.55, 2.72);
\node [smallcircle] at (0.7,1) {};
\node [smallcircle] at (1.55,2) {};
\end{tikzpicture}
\end{aligned}
&
\begin{aligned}
\begin{tikzpicture}[thick, scale=\frobscale]
\draw (0,0) to [out=up, in=down] (1.4,1.2) to [out=up, in=\seangle] (0.7,2) to (0.7,2.75);
\draw (1.4,0) to [out=up, in=down] (0,1.2) to [out=up ,in=\swangle] (0.7,2);
\node [smallcircle] at (0.7,2) {};
\end{tikzpicture}
\end{aligned}
\quad=\quad
\begin{aligned}
\begin{tikzpicture}[thick, scale=\frobscale]
\draw (0,0) to [out=up, in=\swangle] (0.7,2) to (0.7,2.75);
\draw (1.4,0) to [out=up, in=\seangle] (0.7,2) node [smallcircle] {};
\end{tikzpicture}
\end{aligned}
\end{calign}

\vspace{-10pt}
\begin{calign}
\begin{aligned}
\begin{tikzpicture}[thick, scale=\frobscale]
\draw (0,-0.75) to (0,0) to [out=up, in=\swangle] (0.7,1);
\draw (1.4,0) to [out=up, in=\seangle] (0.7,1);
\draw (0.7,1) to (0.7,1.75);
\node [smallcircle] at (0.7,1) {};
\node [smallcircle] at (1.4,0) {};
\end{tikzpicture}
\end{aligned}
\quad=\quad
\begin{aligned}
\begin{tikzpicture}[thick, scale=\frobscale]
\draw (0,0) to (0,2.5);
\end{tikzpicture}
\end{aligned}
\quad=\quad
\begin{aligned}
\begin{tikzpicture}[thick, xscale=-1, scale=\frobscale]
\draw (0,-0.75) to (0,0) to [out=up, in=\swangle] (0.7,1);
\draw (1.4,0) to [out=up, in=\seangle] (0.7,1);
\draw (0.7,1) to (0.7,1.75);
\node [smallcircle] at (0.7,1) {};
\node [smallcircle] at (1.4,0) {};
\end{tikzpicture}
\end{aligned}
&
\begin{aligned}
\begin{tikzpicture}[thick, scale=\frobscale]
\draw (0,0) to (0,0.65) node [smallcircle] {} to (0,0.65) to [out=\nwangle, in=down] (-0.5,1.25) to [out=up, in=\swangle] (0,1.85) node [smallcircle] {} to (0,2.5);
\draw (0,0.65) to [out=\neangle, in=down] (0.5,1.25) to [out=up, in=\seangle] (0,1.85);
\end{tikzpicture}
\end{aligned}
\quad=\quad
\begin{aligned}
\begin{tikzpicture}[thick, scale=\frobscale]
\draw (0,0) to (0,2.5);
\end{tikzpicture}
\end{aligned}
\end{calign}

\vspace{-10pt}
\begin{calign}
\begin{aligned}
\begin{tikzpicture}[thick, scale=\frobscale]
\draw (0,0) to (0,0.75) node [smallcircle] {};
\draw (0,0.75) to [out=\neangle, in=\swangle] (0.7,1.75) node [smallcircle] {};
\draw (0,0.75) to [out=\nwangle, in=down] (-0.7,2.5);
\draw (0.7,1.75) to (0.7,2.5);
\draw (0.7,1.75) to [out=\seangle, in=up] (1.4,0);
\end{tikzpicture}
\end{aligned}
\quad=\quad
\begin{aligned}
\begin{tikzpicture}[thick, scale=\frobscale]
\draw (0,0) to [out=up, in=\swangle] (0.7,0.75) node [smallcircle] {};
\draw (1.4,0) to [out=up, in=\seangle] (0.7,0.75) node [smallcircle] {};
\draw (0,2.5) to [out=down, in=\nwangle] (0.7,1.75) node [smallcircle] {};
\draw (1.4,2.5) to [out=down, in=\neangle] (0.7,1.75) node [smallcircle] {};
\draw (0.7,0.75) to (0.7,1.75);
\end{tikzpicture}
\end{aligned}
\quad=\quad
\begin{aligned}
\begin{tikzpicture}[thick, yscale=-1, scale=\frobscale]
\draw (0,0) to (0,0.75) node [smallcircle] {};
\draw (0,0.75) to [out=\neangle, in=\swangle] (0.7,1.75) node [smallcircle] {};
\draw (0,0.75) to [out=\nwangle, in=down] (-0.7,2.5);
\draw (0.7,1.75) to (0.7,2.5);
\draw (0.7,1.75) to [out=\seangle, in=up] (1.4,0);
\end{tikzpicture}
\end{aligned}
\end{calign}
Conversely, these classical structure axioms imply property 1 of Theorem~\ref{thm:spider}, a result known as the Spider Theorem~\cite{cp06-pnwt}.

\subsection{Functions between sets}
\label{sec:functionsbetweensets}

\noindent
A function $S \sxto f T$ between finite sets extends to a linear map in a natural way, by defining $f \ket s := \ket {f(s)}$. We denote this graphically in the following simple way:
\begin{equation}
\begin{aligned}
\begin{tikzpicture}[thick, scale=\licsscale]
\draw (0,0) node [below] {$S$} to node [smallbox] {$f$} (0,2) node [above] {$T$};
\end{tikzpicture}
\end{aligned}
\end{equation}
The linear maps arising in this way can be characterized in an elegant way using the graphical notation, as morphisms satisfying the following two conditions:
\begin{calign}
\begin{aligned}
\begin{tikzpicture}[thick, yscale=-1, scale=\licsscale]
\draw (0,0) node [above] {$T$} to [out=up, in=\swangle] (0.7,1) node [smallcircle] {} to (0.7,1.75) node [smallbox, anchor=north] {$f$} to (0.7,3) node [below] {$S$};
\draw (1.4,0) node[above] {$T$} to [out=up, in=\seangle] (0.7,1);
\end{tikzpicture}
\end{aligned}
\quad=\quad
\begin{aligned}
\begin{tikzpicture}[thick, yscale=-1, scale=\licsscale]
\draw (0,0) node [above] {$T$} to (0,1.25) node [smallbox, anchor=south] {$f$} to [out=up, in=\swangle] (0.7,2.25) node [smallcircle] {} to (0.7,3) node [below] {$S$};
\draw (1.4,0) node [above] {$T$} to (1.4,1.25) node [smallbox, anchor=south] {$f$} to [out=up, in=\seangle] (0.7,2.25);
\end{tikzpicture}
\end{aligned}
&
\begin{aligned}
\begin{tikzpicture}[thick, scale=\licsscale]
\draw (0,0) node [below] {$S$} to (0,1.25) node [smallbox] {$f$} to (0,2.5) node [smallcircle] {};
\node [above] at (0,3) {$\phantom T$};
\end{tikzpicture}
\end{aligned}
\quad=\quad
\begin{aligned}
\begin{tikzpicture}[thick, scale=\licsscale]
\draw (0,0) node [below] {$S$} to (0,1.25) node [smallcircle] {};
\node [above] at (0,3) {$\phantom T$};
\end{tikzpicture}
\end{aligned}
\end{calign}
These are the \emph{comonoid homomorphism} conditions. The first says that if you  apply $f$ and then copy the result in the canonical basis of $T$, this is the same as copying the initial state in the canonical basis of $S$ and then applying $f$ to each branch. The second equation says that if you apply $f$ and then uniformly delete the result according to the canonical basis of $T$, the result is the same as simply deleting in the canonical basis of $S$.

A chosen element $x \in S$ corresponds to a function $1 \sxto x S$, which gives rise to a linear map with the following graphical representation:
\begin{equation}
\label{eq:xlinearmap}
\begin{aligned}
\begin{tikzpicture}[thick, scale=\licsscale]
\draw (0,0) node [smallbox] {$x$} to (0,1.0) node [above] {$S$};
\end{tikzpicture}
\end{aligned}
\end{equation}
The deletion operation $u^\dag$ introduced in~\eqref{eq:copydelete} acts in the same way on all elements, reducing them to the unit scalar:
\begin{equation}
\begin{aligned}
\begin{tikzpicture}[thick, scale=\licsscale]
\draw (0,0) node [smallbox] {$x$} to (0,1.0) node [smallcircle] {};
\end{tikzpicture}
\end{aligned}
\quad=\quad
1
\end{equation}
The linear map~\eqref{eq:xlinearmap} is related to its adjoint in the following way:
\begin{equation}
\begin{aligned}
\begin{tikzpicture}[thick, yscale=-1, xscale=-1, scale=\licsscale]
\draw (0,-1) node [above] {$S$} to (0,0) to [out=up, in=\swangle] (0.7,1);
\draw (1.4,0) to [out=up, in=\seangle] (0.7,1);
\draw (0.7,1) to (0.7,1.75) node [smallcircle] {};
\node [smallcircle] at (0.7,1) {};
\node [smallbox, anchor=south] at (1.4,0) {$x ^\dag$};
\end{tikzpicture}
\end{aligned}
\quad=\quad
\begin{aligned}
\begin{tikzpicture}[thick, scale=\licsscale]
\draw (0,0.25) node [smallbox] {$x$} to (0,1.5) node [above] {$S$};
\node [smallcircle, white] at (0,-1.3) {};
\end{tikzpicture}
\end{aligned}
\quad=\quad
\begin{aligned}
\begin{tikzpicture}[thick, yscale=-1, scale=\licsscale]
\draw (0,-1) node [above] {$S$} to (0,0) to [out=up, in=\swangle] (0.7,1);
\draw (1.4,0) to [out=up, in=\seangle] (0.7,1);
\draw (0.7,1) to (0.7,1.75) node [smallcircle] {};
\node [smallcircle] at (0.7,1) {};
\node [smallbox, anchor=south] at (1.4,0) {$x ^\dag$};
\end{tikzpicture}
\end{aligned}
\end{equation}
These equations say that the adjoint of~\eqref{eq:xlinearmap} is its \emph{dual}.

We can also use the topological calculus to express some simple features of particular linear maps. If a linear map $S \sxto f T$ arises from a function between underlying sets in the manner described above, then this function is \emph{injective} iff $f$ is an isometry, expressed by the following equation:
\begin{equation}
\begin{aligned}
\begin{tikzpicture}[thick, scale=\licsscale]
\draw (0,0) node [below] {$S$} to (0,1) node [smallbox] {$f$} to (0,2.25) node [smallbox] {$f ^\dagger$} to (0,3.25) node [above] {$S$};
\end{tikzpicture}
\end{aligned}
\,\,\quad=\quad
\begin{aligned}
\begin{tikzpicture}[thick, scale=\licsscale]
\draw (0,0) node [below] {$S$} to (0,3.25) node [above] {$S$};
\end{tikzpicture}
\end{aligned}
\end{equation}
Also, suppose a linear map $S \sxto g T$ arises from a function of underlying sets, and is \emph{evenly surjective}, meaning the preimage of every basis element of $T$ has the same cardinality $n$. Then the following equation holds:
\begin{equation}
\begin{aligned}
\begin{tikzpicture}[thick, scale=\licsscale]
\draw (0,0) node [smallcircle] {} to node [auto, swap, pos=0.35, inner sep=2pt] {$S$} (0,1) node [smallbox] {$g$} to (0,2) node [above] {$T$};
\end{tikzpicture}
\end{aligned}
\quad=\hspace{2pt}
\begin{aligned}
\begin{tikzpicture}[thick, scale=\licsscale]
\draw (0,1) node (dot) [smallcircle] {} to (0,2) node [above] {$T$};
\draw (0,0) node [smallcircle, white] {};
\node [anchor=east] at (dot.west) {$n$};
\end{tikzpicture}
\end{aligned}
\end{equation}
A lot more can be said on this topic, but these results are all we will need.

\subsection{Traces}
\label{sec:trace}

The topological algebra associated with the free Hilbert space $S$ on a finite set also allows us to describe taking traces of arbitrary linear maps $S \sxto L S$. We define the trace graphically in the following way:
\begin{equation}
\Tr(L) \quad=\quad
\begin{aligned}
\begin{tikzpicture}[thick, scale=\licsscale]
\draw (0,0.5) node [smallcircle] {} to (0,1) node [smallcircle] {} to [out=\nwangle, in=down] (-0.7,2) to (-0.7,2.5) to [out=up, in=\swangle] (0,3.5) node [smallcircle] {} to (0,4) node [smallcircle] {};
\draw (0,1) to [out=\neangle, in=down] (0.7,2) to (0.7,2.5) to [out=up, in=\seangle] (0,3.5);
\node (L) [smallbox] at (-0.7,2.25) {$L$};
\end{tikzpicture}
\end{aligned}
\end{equation}
The trace of the identity gives the size of the set $S$:
\begin{equation}
\begin{aligned}
\begin{tikzpicture}[thick, scale=\licsscale]
\draw (0,0.5) node [smallcircle] {} to (0,1) node [smallcircle] {} to [out=\nwangle, in=down] (-0.7,2) to [out=up, in=\swangle] (0,3.0) node [smallcircle] {} to (0,3.5) node [smallcircle] {};
\draw (0,1) to [out=\neangle, in=down] (0.7,2) to [out=up, in=\seangle] (0,3.0);
\end{tikzpicture}
\end{aligned}
\quad=\quad
|S|
\end{equation}
The formalism also allows a purely topological proof of the cyclic property:
\begin{equation}
\nonumber
\begin{aligned}
\begin{tikzpicture}[thick, scale=\licsscale, yscale=0.8]
\draw (0,0.5) node [smallcircle] {} to (0,1) node [smallcircle] {} to [out=\nwangle, in=down] (-0.7,2) to (-0.7,3.5) to [out=up, in=\swangle] (0,4.5) node [smallcircle] {} to (0,5.0) node [smallcircle] {};
\draw (0,1) to [out=\neangle, in=down] (0.7,2) to (0.7,3.5) to [out=up, in=\seangle] (0,4.5);
\node (L) [smallbox] at (-0.7,2.25) {$L$};
\node (L) [smallbox] at (-0.7,3.25) {$M$};
\end{tikzpicture}
\end{aligned}
\quad=\quad
\begin{aligned}
\begin{tikzpicture}[thick, scale=\licsscale, yscale=0.8]
\node (L) at (0,0) [smallbox] {$L$};
\node (M) at (2,0) [smallbox] {$M$};
\draw (L.north) to [out=up, in=\swangle] (0.5,1) node (d1) [smallcircle] {} to [out=\seangle, in=\nwangle] (1.5,-1) node [smallcircle] (d2) {} to [out=\neangle, in=south] (M.south);
\draw (M.north) to [out=up, in=\swangle] (2.5,1) node [smallcircle] (d3) {} to [out=\seangle, in=north] (3,0) to [out=down, in=\neangle] (1.5,-2.5) node [smallcircle] (d4) {} to [out=\nwangle, in=down] (L.south);
\draw (d1.center) to +(0, 0.5) node [smallcircle] {};
\draw (d3.center) to +(0, 0.5) node [smallcircle] {};
\draw (d2.center) to +(0, -0.5) node [smallcircle] {};
\draw (d4.center) to +(0, -0.5) node [smallcircle] {};
\end{tikzpicture}
\end{aligned}
\end{equation}
\begin{equation}
\quad=\quad
\begin{aligned}
\begin{tikzpicture}[thick, scale=\licsscale, yscale=0.8]
\node (L) at (0,0) [smallbox] {$L$};
\node (M) at (1.5,-0.5) [smallbox] {$M$};
\node [smallcircle] (d2) at (1.8,-1.5) {};
\node (d3) [smallcircle] at (1.2,0.75) {};
\draw (L.north) to [out=up, in=\swangle] (1.2,2) node (d1) [smallcircle] {} to [out=\seangle, in=up] (2.1,0) to (2.1,-0.8) to [out=down, in=up] ([xshift=-0.3cm, yshift=0.3cm] d2.center) to [out=down, in=\nwangle, in looseness=1]  (d2.center) to [out=\neangle, in=down, out looseness=1] ([xshift=0.3cm, yshift=0.3cm] d2.center) to [out=up, in=south] (M.south);
\draw (M.north) to [out=up, in=\swangle, in looseness=2] (d3.center) {} to [out=\seangle, in=north] ([xshift=0.3cm, yshift=-0.3cm] d3.center) to [out=down, in=up] (0.9,-0.25) to (0.9,-0.5) to [out=down, in=\neangle] (0.45,-1.5) node [smallcircle] (d4) {} to [out=\nwangle, in=down] (L.south);
\draw (d1.center) to +(0, 0.5) node [smallcircle] {};
\draw (d3.center) to +(0, 0.5) node [smallcircle] {};
\draw (d2.center) to +(0, -0.5) node [smallcircle] {};
\draw (d4.center) to +(0, -0.5) node [smallcircle] {};
\end{tikzpicture}
\end{aligned}
\quad=\quad
\begin{aligned}
\begin{tikzpicture}[thick, scale=\licsscale, yscale=0.8]
\draw (0,0.5) node [smallcircle] {} to (0,1) node [smallcircle] {} to [out=\nwangle, in=down] (-0.7,2) to (-0.7,3.5) to [out=up, in=\swangle] (0,4.5) node [smallcircle] {} to (0,5.0) node [smallcircle] {};
\draw (0,1) to [out=\neangle, in=down] (0.7,2) to (0.7,3.5) to [out=up, in=\seangle] (0,4.5);
\node (L) [smallbox] at (-0.7,2.25) {$M$};
\node (L) [smallbox] at (-0.7,3.25) {$L$};
\end{tikzpicture}
\end{aligned}
\end{equation}

\subsection{Matrix algebras}

\newcommand\name[1]{\ensuremath{\text{\textopencorner} #1 \text{\textcorner}}}
\noindent
The space of operators on a Hilbert space $\C ^n$ is canonically isomorphic to $\C ^n \otimes \C^n$. Given $\C^n \sxto L \C ^n$, we define its \emph{name} $\name{L} \in \C^n \otimes \C^n$ in the following way:
\begin{equation}
\begin{aligned}
\begin{tikzpicture}[thick, scale=\licsscale]
\node (L) [draw, minimum width=1.3cm] at (0,0) {\name L};
\draw (L.35) to ([yshift=0.5cm] L.35) node [above] {$\C ^n$};
\draw (L.145) to ([yshift=0.5cm] L.145) node [above] {$\C ^n$};
\node [smallcircle, white] at (0,-1.5) {};
\end{tikzpicture}
\end{aligned}
\quad:=\quad\hspace{-5pt}
\begin{aligned}
\begin{tikzpicture}[thick, yscale=-1, , scale=\licsscale, scale=0.7]
\draw (0,-1.5) node [above] {$\C ^n$} to (0,0) to [out=up, in=\swangle] (0.7,1);
\draw (1.4,0) to [out=up, in=\seangle] (0.7,1);
\draw (0.7,1) to (0.7,1.75) node [smallcircle] {};
\node [smallcircle] at (0.7,1) {};
\node (L) [smallbox, anchor=south] at (1.4,0) {$L$};
\draw (L.north) to (1.4,-1.5) node [above] {$\C ^n$};
\end{tikzpicture}
\end{aligned}
\end{equation}
The correspondence between operators and their names is known in quantum information as the Choi-Jamiolkowski isomorphism, and plays an important role in logic gate teleportation. There is an algebra operation
\begin{equation}
\text{comp} : (\C^n \otimes \C^n) \otimes (\C^n \otimes \C^n) \to \C^n \otimes \C^n
\end{equation}
on $\C^n \otimes \C^n$,  which is the algebra $\Mat(n)$ of $n$-by-$n$ matrices. It is given by the following diagram:
\begin{equation}
\begin{aligned}
\begin{tikzpicture}[thick, scale=0.7, scale=\licsscale]
\draw (0,0) to [out=up, in=\swangle] (0.7,1) node [smallcircle] {} to [out=\seangle, in=up] (1.4,0);
\draw (0.7,1) to (0.7,1.5) node [smallcircle] {};
\draw (-1.1,0) to [out=up, in=down] (0.15,2.5);
\draw (2.5,0) to [out=up, in=down] (1.25,2.5);
\end{tikzpicture}
\end{aligned}
\end{equation}
This has the property that $\text{comp}(\name g, \name f) = \name{g  \circ f}$, as the following graphical argument shows:
\begin{equation}
\begin{aligned}
\begin{tikzpicture}[thick, xscale=1.2, scale=\licsscale]
\draw (0,0.5) node [above] {$\C^n$} to [out=down, in=up] (-0.7,-1) to (-0.7,-2) to [out=down, in=\nwangle] (-0.35,-2.5) node [smallcircle] (A) {} to [out=\neangle, in=down] (0,-2) to node [smallbox] {$g$} (0,-1) to [out=up, in=\swangle] (0.35,-0.5) node [smallcircle] (B) {} to [out=\seangle, in=up] (0.7,-1) to (0.7,-2) to [out=down, in=\nwangle] (1.05, -2.5) node [smallcircle] (C) {} to [out=\neangle, in=down] (1.4,-2) to node [smallbox] {$f$} (1.4,-1) to [out=up, in=down] (0.7,0.5) node [above] {$\C ^n$};
\draw (A.center) to ([yshift=-0.35cm] A.center) node [smallcircle] {};
\draw (B.center) to ([yshift=0.35cm] B.center) node [smallcircle] {};
\draw (C.center) to ([yshift=-0.35cm] C.center) node [smallcircle] {};
\node [smallcircle, white] at (0,-3.85) {};
\end{tikzpicture}
\end{aligned}
\quad=\quad
\begin{aligned}
\begin{tikzpicture}[thick, scale=\licsscale]
\draw (-0.7,0) node [above] {$\C^n$} to [out=down, in=up] (0,-1) to node [smallbox] {$f$} (0,-2) to [out=down, in=\neangle] (-0.35,-2.5) node (A) [smallcircle] {} to [out=\nwangle, in=down] (-0.7,-2) to [out=up, in=\seangle] (-1.05,-1.5) node (B) [smallcircle] {} to [out=\swangle, in=up] (-1.4,-2) to (-1.4,-2.5) to node [smallbox] {$g$} (-1.4,-3.5) to [out=down, in=\neangle] (-1.75,-4) node [smallcircle] (C) {} to [out=\nwangle, in=down] (-2.1,-3.5) to (-2.1,-1.5) to [out=up, in=down] (-1.4,0) node [above] {$\C^n$};
\draw (A.center) to ([yshift=-0.35cm] A.center) node [smallcircle] {};
\draw (B.center) to ([yshift=0.35cm] B.center) node [smallcircle] {};
\draw (C.center) to ([yshift=-0.35cm] C.center) node [smallcircle] {};
\end{tikzpicture}
\end{aligned}
\quad=\quad
\begin{aligned}
\begin{tikzpicture}[thick, yscale=-1, scale=0.7, scale=\licsscale]
\draw (0,-2.75) node [above] {$\C ^n$} to (0,0) to [out=up, in=\swangle] (0.7,1);
\draw (1.4,0) to [out=up, in=\seangle] (0.7,1);
\draw (0.7,1) to (0.7,1.75) node [smallcircle] {};
\node [smallcircle] at (0.7,1) {};
\node (L) [smallbox, anchor=south] at (1.4,0) {$g$};
\draw (L.north) to (1.4,-2.75) node [above] {$\C ^n$};
\node [smallbox, anchor=south] at (1.4,-1.25) {$f$};
\node [smallcircle, white] at (0,3.45) {};
\end{tikzpicture}
\end{aligned}
\end{equation}
For convenience, we will use the following graphical simplifications:
\begin{calign}
\begin{aligned}
\begin{tikzpicture}[thick, scale=0.7, scale=\licsscale]
\draw (0,0) to [out=up, in=up, looseness=2] (1.4,0);
\node [smallcircle, white] at (0.7,1.5) {};
\end{tikzpicture}
\end{aligned}
\quad:=\quad
\begin{aligned}
\begin{tikzpicture}[thick, scale=0.7, scale=\licsscale]
\draw (0,0) to [out=up, in=\swangle] (0.7,1) node [smallcircle] {} to [out=\seangle, in=up] (1.4,0);
\draw (0.7,1) to (0.7,1.5) node [smallcircle] {};
\end{tikzpicture}
\end{aligned}
&
\begin{aligned}
\begin{tikzpicture}[thick, scale=0.7, yscale=-1, scale=\licsscale]
\draw (0,0) to [out=up, in=up, looseness=2] (1.4,0);
\node [smallcircle, white] at (0.7,1.5) {};
\end{tikzpicture}
\end{aligned}
\quad:=\quad
\begin{aligned}
\begin{tikzpicture}[thick, scale=0.7, yscale=-1, scale=\licsscale]
\draw (0,0) to [out=up, in=\swangle] (0.7,1) node [smallcircle] {} to [out=\seangle, in=up] (1.4,0);
\draw (0.7,1) to (0.7,1.5) node [smallcircle] {};
\end{tikzpicture}
\end{aligned}
\end{calign}
We will write $\Mat(n)$ and $\C^n \otimes \C^n$ interchangeably to refer to the matrix algebra we have described here.

\subsection{Group representations}

A group $G$ has a multiplication function $G \times G \sxto {\scriptstyle m} G$ and unit element $e \in G$, which we think of as a function $1 \sxto e G$ from the 1-point set. Linearizing these, we denote them graphically in the following way:
\begin{calign}
\begin{aligned}
\begin{tikzpicture}[thick, scale=\licsscale]
\draw (-0.7,-1) node [below] {$G$} to [out=up, in=\swangle] (0,0);
\draw (0.7,-1) node [below] {$G$} to [out=up, in=\seangle] (0,0);
\draw (0,0) to (0,1) node [above] {$G$};
\node (m) at (0,0) [circlelabel] {$m$};
\end{tikzpicture}
\end{aligned}
&
\begin{aligned}
\begin{tikzpicture}[thick, scale=\licsscale]
\draw [white] (-0.7,-1) to [out=up, in=\swangle] (0,0);
\draw [white] (0.7,-1) to [out=up, in=\seangle] (0,0);
\draw (0,0) to (0,1) node [above] {$G$};
\node (m) at (0,0) [circlelabel] {$e$};
\node [below, white] at (0,-1) {$G$};
\end{tikzpicture}
\end{aligned}
\end{calign}
In the case that $G$ is finite, its representations can be characterized as the maps $G \sxto \rho \Mat(n)$ with the following topological properties:
\begin{calign}
\label{eq:rhocopied}
\begin{aligned}
\begin{tikzpicture}[thick, scale=\licsscale]
\draw (-0.7,-1) node [below] {$G$} to [out=up, in=\swangle] (0,0);
\draw (0.7,-1) node [below] {$G$} to [out=up, in=\seangle] (0,0);
\draw (0,0) to (0,1);
\node (m) at (0,0) [draw, circle, inner sep=1pt, font=\scriptsize, fill=white] {$m$};
\node (rho) at (0,1) [smallbox] {$\rho$};
\draw (rho.60) to ([yshift=0.5cm] rho.60);
\draw (rho.120) to ([yshift=0.5cm] rho.120);
\node at ([yshift=0.5cm] rho.90) [anchor=south] {$\Mat(n)$};
\end{tikzpicture}
\end{aligned}
\hspace{-2pt}=\hspace{5pt}
\begin{aligned}
\begin{tikzpicture}[thick, scale=\licsscale]
\draw (0,-0.05) node [below] {$G$} to (0,1);
\draw (1.5,0.00) node [below] {$G$} to (1.5,1);
\draw [white] (0.75,0) to (0.75,2.25);
\node (r1) at (0,1) [smallbox] {$\rho$};
\node (r2) at (1.5,1) [smallbox] {$\rho$};
\draw (r1.60) to [out=up, in=up] (r2.120);
\draw (r1.120) to [out=up, in=down] (0.55,2.5) to (0.55,2.8);
\draw (r2.60) to [out=up, in=down] (0.95,2.5) to (0.95,2.8);
\node [above] at (0.75,2.8) {$\Mat(n)$};
\end{tikzpicture}
\end{aligned}
&
\begin{aligned}
\begin{tikzpicture}[thick, scale=\licsscale]
\draw (0,0) node [circlelabel] {$e$} to (0,1) node (r1) [smallbox] {$\rho$};
\draw (r1.60) to ([yshift=0.7cm] r1.60);
\draw (r1.120) to ([yshift=0.7cm] r1.120);
\node at (0.0,2) [above] {$\Mat(n)$};
\node [below, white] at (0,-0.8) {$G$};
\end{tikzpicture}
\end{aligned}
\hspace{-4pt}=\hspace{-6pt}
\begin{aligned}
\begin{tikzpicture}[thick, scale=\licsscale]
\draw (0,0) to (0,-1) to [out=down, in=down, looseness=2] (0.5,-1) to (0.5,0);
\node at (0.25,0) [above] {$\Mat(n)$};
\node [below, white] at (0.25,-2.8) {$G$};
\end{tikzpicture}
\end{aligned}
\end{calign}
This says that $\rho$ is a homomorphism from the group algebra $G$ to the matrix algebra $\Mat(n)$, and is equivalent to the ordinary algebraic requirement for a representation that $\rho(g g') = \rho(g)\rho(g')$ for group elements $g,g' \in G$. We define $\dim(\rho):=n$, the dimension of the vector space on which the representation acts.

\subsection{Representations as projective measurements}
\label{sec:repprojmeas}

For a finite group $G$, an irreducible representation $G \sxto \rho \Mat(n)$ is proportional to a partial isometry, with scale factor $\sqrt{{n}/{|G|}}$.
\begin{theorem}
\label{thm:repnorm}
For a finite group $G$ and an irreducible representation $G \sxto \rho \Mat(n)$, the following holds:
\begin{equation}
\begin{aligned}
\begin{tikzpicture}[thick]
\node (rho) at (0,2.5) [smallbox] {$\rho$};
\node (pdag) at (0,1) [smallbox] {$\rho^\dag$};
\draw (pdag.-60 |- 0,0) to (pdag.-60);
\draw (pdag.-120 |- 0,0) to (pdag.-120);
\draw (pdag.north)
    to node [auto, swap, smallfont] {$G$} (rho.south);
\draw (rho.60) to (rho.60 |- 0,3.5);
\draw (rho.120) to (rho.120 |- 0,3.5);
\node [above] at (0,3.5) {$\Mat(n)$};
\node [below] at (0,0) {$\Mat(n)$};
\end{tikzpicture}
\end{aligned}
\quad=\quad
\frac{|G|}{n} 
\begin{aligned}
\begin{tikzpicture}[thick]
\draw (-0.2,0) to (-0.2,3.5);
\draw (0.2,0) to (0.2,3.5);
\node [below] at (0,0) {\makebox[0pt]{$\Mat(n)$}};
\node [above] at (0,3.5) {\makebox[0pt]{$\Mat(n)$}};
\end{tikzpicture}
\end{aligned}
\end{equation}
\end{theorem}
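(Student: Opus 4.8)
The plan is to read this equation as the graphical incarnation of the Schur orthogonality relations, and to establish it in two stages: first that $\rho \circ \rho^\dag$ is forced to be a scalar multiple of $\id_{\Mat(n)}$, and then that the scalar is $|G|/n$. Throughout I assume, as one always may for a representation of a finite group, that $\rho$ has been put in unitary form, so that the Hilbert-space adjoint $\rho^\dag$ is $\rho$ precomposed with group inversion; graphically this says each $\rho(g)$ is a unitary matrix, which is what makes the axioms~\eqref{eq:rhocopied} interact correctly with daggers. Unfolding the left-hand side, since the linearised representation sends $\ket g$ to the name $\name{\rho(g)} \in \Mat(n)$, the composite $\rho \circ \rho^\dag$ is the operator $\sum_{g \in G} \name{\rho(g)} \bra{\name{\rho(g)}}$ on $\Mat(n)$, so the statement is precisely the ``great orthogonality theorem''.

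\emph{Step 1: $\rho \circ \rho^\dag$ is a scalar.} The homomorphism property~\eqref{eq:rhocopied} says exactly that post-composing $\rho$ with left multiplication by $\rho(g)$ inside the matrix algebra equals precomposing $\rho$ with multiplication by $\ket g$ on the group wire, and dually for $\rho^\dag$; in the dual identity the reindexing of the implicit sum $\sum_{h\in G}$ uses unitarity in the form $\rho(g)^\dag = \rho(g^{-1})$. Composing the two, $\rho \circ \rho^\dag$ commutes with left multiplication by every $\rho(g)$, and symmetrically with right multiplication by every $\rho(g)$. Since $\rho$ is irreducible the matrices $\rho(g)$ span $\Mat(n)$, so $\rho \circ \rho^\dag$ commutes with all left and all right multiplications on $\Mat(n)$; the only such operators are the scalars, so $\rho \circ \rho^\dag = c\,\id_{\Mat(n)}$ for some $c \in \C$. (Equivalently, this step is a graphical Schur's lemma: $\rho \circ \rho^\dag$ is an endomorphism of the irreducible $G \times G$-representation $\C^n \otimes \overline{\C^n}$.)

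\emph{Step 2: the scalar.} Take the trace of both sides using the machinery of Section~\ref{sec:trace}. The right-hand side contributes $c\,\Tr(\id_{\Mat(n)}) = c\,n^2$. For the left-hand side, the cyclic property gives $\Tr(\rho \circ \rho^\dag) = \Tr(\rho^\dag \circ \rho)$, and the diagonal entries of $\rho^\dag \circ \rho \colon G \to G$ are $\langle \name{\rho(g)} \,|\, \name{\rho(g)} \rangle = \Tr(\rho(g)^\dag \rho(g)) = \Tr(\id_n) = n$, so its trace is $|G|\,n$. Hence $c\,n^2 = |G|\,n$, that is $c = |G|/n$, as claimed.

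\emph{Expected obstacle.} The genuine content lies in Step~1 --- namely, giving a bona fide \emph{topological} argument that an equivariant endomorphism of an irreducible representation is a scalar. One option is to import Schur's lemma as a known fact and merely diagram-chase the equivariance (quick, but not self-contained); the other is to reconstruct Schur's lemma graphically, which in effect means redoing the averaging/twirl argument while carefully juggling the two Frobenius structures carried by the group algebra $G$ --- the pointwise one inherited from the underlying set, and the convolution one built from the multiplication $m$. A lesser, purely routine nuisance is keeping the Choi--Jamiolkowski conventions and the unitarisation reduction straight.
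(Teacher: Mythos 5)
Your proposal is correct, and it reaches the result by a related but genuinely differently organized argument from the paper's. The paper proves the algebraic statement (Theorem~\ref{thm:algebraicrhonormalization}) and then transcribes it graphically: using surjectivity of an irreducible $\rho$ on the group algebra and the explicit formula $\rho^\dag(M)=\sum_{h}\Tr(\rho(h^{-1})\circ M)\,\ket h$, it reduces everything to the single matrix $X=\sum_h \Tr(\rho(h^{-1}))\,\rho(h)\in\Mat(n)$, shows $X$ intertwines $\rho$, applies the classical Schur's lemma on $\C^n$, and fixes the constant via character orthonormality, $\sum_h \Tr(\rho(h^{-1}))\Tr(\rho(h))=|G|$. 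You instead stay at the level of the superoperator $\rho\circ\rho^\dag$ on $\Mat(n)$: equivariance under all left and right multiplications, together with the fact that the $\rho(g)$ span $\Mat(n)$ (equivalently, irreducibility of $\C^n\otimes\overline{\C^n}$ as a $G\times G$-representation), forces $\rho\circ\rho^\dag=c\,\id$, and the constant follows from the elementary computation $\Tr(\rho^\dag\circ\rho)=\sum_g\Tr(\rho(g)^\dag\rho(g))=|G|\,n$ set against $c\,n^2$. Both routes use Schur's lemma in some form and both use the spanning/surjectivity of irreducible representations on the group algebra (the paper for its reduction step, you for the commutant step); the trade-off is that your normalization needs only unitarity, avoiding both character orthogonality and the explicit formula for $\rho^\dag$, while the paper needs only the standard Schur's lemma on $\C^n$ rather than the commutant-of-$\Mat(n)\otimes\Mat(n)^{\mathrm{op}}$ fact. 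Your explicit unitarity assumption is exactly the paper's implicit one (it writes $\langle\rho(g)|M\rangle_{\Mat(n)}=\Tr(\rho(g^{-1})\circ M)$, i.e.\ $\rho(g)^\dag=\rho(g^{-1})$), so flagging it is a point in your favour; and the ``expected obstacle'' you raise is moot, since the paper likewise imports Schur's lemma and proves the identity algebraically, with the graphical theorem obtained merely as a transcription.
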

\begin{proof}
A simple graphical transcription of Theorem~\ref{thm:algebraicrhonormalization}.
\end{proof}

\noindent
The group algebra of a finite group is isomorphic to $\bigoplus _\rho \Mat(\dim(\rho))$, where the sum is taken over equivalence classes of irreducible representations. As a result, the family of maps
\begin{equation}
\begin{aligned}
\begin{tikzpicture}[thick]
\draw node [below] at (0,0) {$G$} to (0,1);
\node [above] at (0,2) {$\Mat(n)$};
\node (rho) at (0,1) [smallbox] {$\rho$};
\node [anchor=east] at (rho.west) {$\displaystyle \sqrt{\frac{n}{|G|}}$};
\draw (rho.60) to (rho.60 |- 0,2);
\draw (rho.120) to (rho.120 |- 0,2);
\end{tikzpicture}
\end{aligned}
\end{equation}
define a projective measurement on the group algebra. This will be degenerate in general, since for a nonabelian group not all irreducible representations are 1-dimensional.

\subsection{Decomposing group algebras}

Group algebras are semisimple, which in the finite-dimensional case means that they are products of matrix algebras. By standard results from group representation theory, these matrix algebras are the matrix algebras on the irreducible representation spaces, where each equivalence class of irreducible representation is represented once. The multiplication operation for the group algebra can be decomposed in the following way, where the sum is over each equivalence class of irreducible representations, and where $|G|$ is the order of the group:
\begin{equation}
\label{eq:mdecomp}
\begin{aligned}
\begin{tikzpicture}[thick]
\draw (-0.7,-1.5) node [below] {$G$} to (-0.7,-1) to [out=up, in=\swangle] (0,0);
\draw (0.7,-1.5) node [below] {$G$} to (0.7,-1) to [out=up, in=\seangle] (0,0);
\draw (0,0) to (0,1) node [above] {$G$};
\node (m) at (0,0) [draw, circle, inner sep=1pt, font=\scriptsize, fill=white] {$m$};
\end{tikzpicture}
\end{aligned}
\quad=\quad
\frac{1}{|G|} \hspace{5pt} \sum_\rho \dim (\rho)
\begin{aligned}
\begin{tikzpicture}[thick]
\draw (0,-1.5) node [below] {$G$} to (0,-0.9) node (r1) [smallbox] {$\rho$};
\draw (1,-1.5) node [below] {$G$} to (1,-0.9) node (r2) [smallbox] {$\rho$};
\draw (0.5,1) node [above] {$G$} to (0.5, 0.4) node (r3) [smallbox] {$\rho ^\dag$};
\draw (r1.120) to [out=up, in=down, in looseness=0.7] (r3.-120);
\draw (r2.60) to [out=up, in=down, in looseness=0.7] (r3.-60);
\draw (r1.60) to [out=up, in=up, looseness=1.2] (r2.120);
\end{tikzpicture}
\end{aligned}
\end{equation}
The general form of the right-hand side here can be deduced from the first property of~\eqref{eq:rhocopied}, which says that each $\rho$ is an algebra homomorphism. The coefficients can then be obtained by applying Theorem~\ref{thm:repnorm}.

We can use equation~\eqref{eq:mdecomp} to show that the multiplication vertex also copies irreducible representations when attached to either of the lower legs. For example, we obtain the following identity when composing with the adjoint of an irreducible representation  $\sigma ^\dag$ onto the lower-right leg, for which we make use of Theorem~\ref{thm:repnorm}:
\begin{equation}
\label{eq:copyonleg}
\begin{aligned}
\begin{tikzpicture}[thick, xscale=-1, scale=0.8]
\draw (-0.7,-1.7) to (-0.7,-1) to [out=up, in=\swangle] (0,0);
\draw (0.7,-2.7) node [below] {$G \vphantom)$} to (0.7,-1) to [out=up, in=\seangle] (0,0);
\draw (0,0) to (0,1) node [above] {$G$};
\node (m) at (0,0) [draw, circle, inner sep=1pt, font=\scriptsize, fill=white] {$m$};
\node (s) at (-0.7,-1.7) [smallbox, anchor=north] {$\sigma ^\dag$};
\draw (s.-120) to (s.-120 |- 0,-2.7);
\draw (s.-60) to (s.-60 |- 0,-2.7);
\node at (-0.7,-2.7) [below] {\makebox[0pt]{$\Mat(n)$}};
\end{tikzpicture}
\end{aligned}
\hspace{5pt}=\hspace{5pt}
{\scriptstyle \frac{\sum_\rho \dim(\rho)}{|G|} }
\begin{aligned}
\begin{tikzpicture}[thick, xscale=-1, scale=0.8]
\draw (0,-1) to (0,-0.4) node (r1) [smallbox] {$\rho$};
\draw (1,-2) node [below] {$G \vphantom)$} to (1,-0.4) node (r2) [smallbox] {$\rho$};
\draw (0.5,1) to (0.5, 0.9) node (r3) [smallbox] {$\rho ^\dag$};
\node (s) at (-0.0,-1) [smallbox, anchor=north] {$\sigma ^\dag$};
\draw (r2.120) to [out=up, in=down, in looseness=0.7] (r3.-120);
\draw (r1.60) to [out=up, in=down, in looseness=0.7] (r3.-60);
\draw (r2.60) to [out=up, in=up, looseness=1.2] (r1.120);
\draw (r3.north) to (0.5,1.7) node [above] {$G$};
\draw (s.-120) to (s.-120 |- 0,-2);
\draw (s.-60) to (s.-60 |- 0,-2);
\node at (0,-2) [below] {\makebox[0pt]{$\Mat(n)$}};
\end{tikzpicture}
\end{aligned}
\hspace{5pt}=\hspace{5pt}
\begin{aligned}
\begin{tikzpicture}[thick, xscale=-1, scale=0.8]
\draw (1,-2) node [below] {$G \vphantom)$} to (1,-0.4) node (r2) [smallbox] {$\sigma$};
\draw (0.5,1) to (0.5, 0.9) node (r3) [smallbox] {$\sigma ^\dag$};
\node (s) at (-0.0,-1) [smallbox, anchor=north, white] {$\sigma ^\dag$};
\draw (r2.120) to [out=up, in=down, in looseness=0.7] (r3.-120);
\draw (r1.60) to [out=up, in=down, in looseness=0.7] (r3.-60);
\draw (r2.60) to [out=up, in=up, looseness=1.2] (r1.120);
\draw (r3.north) to (0.5,1.7) node [above] {$G$};
\draw (r1.120) to (s.-120 |- 0,-2);
\draw (r1.60) to (s.-60 |- 0,-2);
\node at (0,-2) [below] {$\Mat(n)$};
\end{tikzpicture}
\end{aligned}
\end{equation}
A similar expression holds for $\sigma ^\dag$ attached to the lower-left leg. Intuitively, the map $\sigma ^\dag$ is `pulled through' the vertex $m$ when acting on one of the lower legs, extending the observation in~\eqref{eq:rhocopied} that representations are copied when composed at the target of the multiplication.

\subsection{Normal subgroups}

\noindent
For a normal subgroup $H$ of a finite group $G$, we can construct a canonical projection
\begin{equation}
G \sxto q G/H
\end{equation}
from $G$ to the quotient group $G/H$. For an irreducible representation $G \sxto \rho \Mat(n)$, exactly one of the following two topological facts is true.
\begin{theorem}
\label{thm:graphicalnormalsubgroup}
Given a projection $G \sxto q G/H$ onto the quotient group of a normal subgroup $H \subseteq G$, then for an irreducible representation $G \sxto \rho \Mat(n)$, exactly one of the following is true:
\begin{enumerate}
\item The representation $\rho$ factors as
\allowdisplaybreaks[1]
\begin{align}
\begin{aligned}
\begin{tikzpicture}[thick, scale=\licsscale]
\node (tau) at (0,1.75) [smallbox] {$\rho$};
\draw (0,0) node [below] {$G$}
    to (tau.south);
\draw (tau.60) to (tau.60 |- 0,3.5);
\draw (tau.120) to (tau.120 |- 0,3.5);
\node [above] at (0,3.5) {\makebox[0pt]{$\Mat(n)$}};
\end{tikzpicture}
\end{aligned}
\hspace{5pt}\quad&=\quad
\begin{aligned}
\begin{tikzpicture}[thick, scale=\licsscale]
\node (rho) at (0,2.5) [smallbox] {$\tau$};
\node (pdag) at (0,1) [smallbox] {$q$};
\draw (0,0) node [below] {$G$}
    to (pdag.south);
\draw (pdag.north)
    to node [auto, swap] {$G/H$} (rho.south);
\draw (rho.60) to (rho.60 |- 0,3.5);
\draw (rho.120) to (rho.120 |- 0,3.5);
\node [above] at (0,3.5) {\makebox[0pt]{$\Mat(n)$}};
\end{tikzpicture}
\end{aligned}
\hspace{20pt}
\ignore{\begin{array}{c}
\text{\em for some representation}\\
\text{\em $G/H \sxto \tau \Mat(n)$ of $G/H$}
\end{array}
\hspace{-40pt}}
\intertext{for some irreducible representation $G/H \sxto \tau \Mat(n)$.
\item The representation $\rho$ does not factor via $G \sxto q G/H$ and}
\begin{aligned}
\begin{tikzpicture}[thick, scale=\licsscale]
\begin{pgfonlayer}{foreground}
    \node (dot) [smallcircle] at (0,1) {};
    \node (s) [smallbox, anchor=south, thick] at (0.7,2) {$q$};
    \node (pi) [smallbox, anchor=south, thick] at (-0.7,2) {$\rho$};
\end{pgfonlayer}
\draw (0,0.25)
        node (dot2) [smallcircle] {}
    to (0,1)
    to [out=\nwangle, in=south] (-0.7,2);
\node [above] at (-0.7,3.25) {\makebox[0pt]{$\Mat(n) \vphantom/$}};
\draw (0,1)
    to [out=\neangle, in=south] (s.south)
    to (0.7,3.25) node [above] {$G/H$};
\node at (0,0.65) [anchor=west] {$G$};
\draw (pi.60) to (pi.60 |- 0,3.25);
\draw (pi.120) to (pi.120 |- 0,3.25);
\end{tikzpicture}
\end{aligned}
\hspace{-10pt}
\quad&=\quad
0\,.
\end{align}
\end{enumerate}
\end{theorem}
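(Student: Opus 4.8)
The plan is to reduce everything to one operator: the element $P_H := \sum_{h\in H}\rho(h)$ of $\mathrm{Mat}(n)$. Up to the scalar factor $|H|$ this is the idempotent $Q_H$ projecting onto the $H$-fixed subspace $V^{H} := \{\,v : \rho(h)v = v \text{ for all } h\in H\,\}$ of the representation space — the checks $Q_H^{2}=Q_H$, $\rho(h_0)Q_H=Q_H=Q_H\rho(h_0)$ and $Q_H|_{V^{H}}=\mathrm{id}$ being routine — so that $P_H = 0 \Leftrightarrow V^{H}=0$ and $P_H = |H|\,\mathrm{id} \Leftrightarrow V^{H}=V_\rho$.

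First I would rewrite both halves of the dichotomy in terms of $P_H$. The state in part~2 is $(\rho\otimes q)\circ m^{\dagger}\circ u = \sum_{g\in G}\rho(g)\otimes\ket{q(g)}$; grouping the sum over the fibres of $q$, which are precisely the cosets of $H$ and each have cardinality $|H|$, rewrites it as $\sum_{C\in G/H}\bigl(P_H\,\rho(g_C)\bigr)\otimes\ket{C}$ for any choice of coset representatives $g_C$. As each $\rho(g_C)$ is invertible, this vanishes if and only if $P_H = 0$. (Diagrammatically this step is just the even surjectivity of $q$ from Section~\ref{sec:functionsbetweensets} combined with the algebra-homomorphism property~\eqref{eq:rhocopied} of $\rho$, fed into the decomposition~\eqref{eq:mdecomp}.) On the other side, $\rho$ factors as $\rho = \tau\circ q$ for a representation $\tau$ of $G/H$ exactly when $\rho$ is constant on cosets, i.e. when $\rho(h)=\rho(e)=\mathrm{id}$ for all $h\in H$, i.e. when $P_H = |H|\,\mathrm{id}$; and in that case $\tau$ is automatically irreducible, since its invariant subspaces coincide with those of $\rho$.

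Next I would invoke normality, which is the one place it is needed. For $v\in V^{H}$, $g\in G$ and $h\in H$ one has $\rho(h)\rho(g)v = \rho(g)\rho(g^{-1}hg)v = \rho(g)v$ because $g^{-1}hg\in H$; hence $V^{H}$ is a $G$-invariant subspace of the representation space. Since $\rho$ is irreducible, $V^{H}$ is therefore either $0$ or all of $V_\rho$ — and by the previous step these are exactly the cases $P_H=0$, which gives part~2 (the diagram vanishes and $\rho$ does not factor through $q$), and $P_H=|H|\,\mathrm{id}$, which gives part~1. The two are mutually exclusive, being distinguished by whether $V^{H}=0$ or $V^{H}=V_\rho$; and when part~1 holds we have $P_H = |H|\,\mathrm{id}\neq 0$, so the diagram of part~2 is $|H|\sum_{C}\tau(C)\otimes\ket{C}\neq 0$. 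Thus both clauses of part~2 fail whenever part~1 holds, which establishes the ``exactly one''.

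The hard part is the normality step, or rather the insight behind it: spotting that the object controlling the whole theorem is the $H$-fixed subspace $V^{H}$ (equivalently the operator $P_H$), and that it is exactly normality of $H$ that makes $V^{H}$ a $G$-subrepresentation, so that irreducibility forces the two-way split. Everything else — the coset bookkeeping, and its translation into the string-diagram calculus via the characterisations in Section~\ref{sec:functionsbetweensets} and equations~\eqref{eq:rhocopied}, \eqref{eq:mdecomp} — is mechanical.
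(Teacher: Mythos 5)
Your proof is correct, but it takes a genuinely different route from the paper's. The paper disposes of Theorem~\ref{thm:graphicalnormalsubgroup} by citing two algebraic lemmas: Theorem~\ref{thm:algebraicnormalsubgroup}, whose proof invokes Clifford's theorem to show that when $\rho$ is nontrivial on $H$ its restriction $\rho_H$ contains no trivial subrepresentation, so $\sum_{h\in H}\rho(h)=0$; and Theorem~\ref{thm:sumtheorem}, which does the coset bookkeeping needed to convert $\sum_{h\in H}\rho(h)=0$ into the vanishing of $\sum_{g\in G}\rho(g)\otimes q(g)$. You replace the Clifford-theorem step by the elementary observation that $\frac{1}{|H|}\sum_{h\in H}\rho(h)$ is the projector onto the $H$-fixed subspace $V^H$, that normality of $H$ makes $V^H$ a $G$-invariant subspace, and that irreducibility then forces $V^H\in\{0,V_\rho\}$ --- a strictly lighter tool that isolates exactly where normality enters. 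Your coset computation also upgrades the paper's one-way implication in Theorem~\ref{thm:sumtheorem} to an equivalence (the diagram vanishes if and only if $\sum_{h\in H}\rho(h)=0$), since each $\rho(g_C)$ is invertible and the $\ket C$ are linearly independent; the paper does not need the converse because in case~1 the clause ``does not factor'' already fails, but having it makes your ``exactly one'' argument cleaner. The remaining ingredients --- the identification of the part~2 diagram with $\sum_g \rho(g)\otimes\ket{q(g)}$, the equivalence of factorization through $q$ with triviality on $H$, and the inheritance of irreducibility by $\tau$ --- match the paper's implicit reasoning, so the net effect of your approach is a self-contained and more elementary proof at the cost of not reusing Clifford's theorem, which the paper's appendix wheels in anyway.
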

\begin{proof}
Immediate from Theorems~\ref{thm:algebraicnormalsubgroup} and~\ref{thm:sumtheorem}.
\end{proof}

\section{Results from group theory}
\subsection{Introduction}

\noindent
In this appendix we collect proofs of some standard results in the theory of finite groups and their complex representations, which are of use as part of our main presentation.

\subsection{Normal subgroups}

\begin{theorem}
\label{thm:algebraicnormalsubgroup}
For a normal subgroup $H$ of a finite group $G$, and an irreducible representation $G \sxto \rho \GL(n)$, exactly one of the following properties holds:
\begin{enumerate}
\item The representation factors via the quotient group $G/H$ and
\[
\sum_{h \in H} \rho(h) = |H| \, \id_n;
\]
\item The representation does not factor via the quotient group {${G/H}$} and
\[
\sum_{h \in H} \rho(h) = 0. \hspace{29pt}
\]
\end{enumerate}
\end{theorem}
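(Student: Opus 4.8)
The plan is to study the averaged operator $P := \frac{1}{|H|}\sum_{h \in H}\rho(h) \in \Mat(n)$, and to show it is both idempotent and a scalar multiple of the identity, so that its only possible values are $0$ and $\id_n$; the two alternatives of the theorem correspond exactly to these two cases.

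First I would verify that $P$ is idempotent. Expanding, $P^2 = \frac{1}{|H|^2}\sum_{h,h' \in H}\rho(hh')$, and for each fixed $h'$ the reindexing $h \mapsto hh'$ gives $\sum_{h \in H}\rho(hh') = \sum_{g \in H}\rho(g) = |H|\,P$; summing over $h'$ yields $P^2 = P$. Next I would use normality of $H$ to show that $P$ commutes with every $\rho(g)$: since $gHg^{-1} = H$, we have $\rho(g)\,P\,\rho(g)^{-1} = \frac{1}{|H|}\sum_{h\in H}\rho(ghg^{-1}) = \frac{1}{|H|}\sum_{g'\in H}\rho(g') = P$. Thus $P$ lies in the commutant of the image of the irreducible representation $\rho$, so by Schur's lemma over $\C$ we get $P = \lambda\,\id_n$ for some $\lambda \in \C$; together with $P^2 = P$ this forces $\lambda \in \{0,1\}$.

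It then remains to match these two values to the statement. If $\lambda = 1$, i.e. $\sum_{h\in H}\rho(h) = |H|\,\id_n$, then for each individual $h \in H$ we compute $\rho(h) = \rho(h)P = \frac{1}{|H|}\sum_{h'\in H}\rho(hh') = P = \id_n$, so $H \subseteq \ker\rho$ and $\rho$ descends along the quotient map $G \to G/H$; this is case~(1). If instead $\lambda = 0$, then $\sum_{h\in H}\rho(h) = 0$, and $\rho$ cannot factor through $G/H$, for if it did then $\rho$ would be trivial on $H$ and the sum would equal $|H|\,\id_n \neq 0$; this is case~(2). The two alternatives are manifestly mutually exclusive (the sum is nonzero in one and zero in the other) and exhaustive (as $\lambda \in \{0,1\}$), which gives the ``exactly one'' assertion.

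The only genuine subtlety, and the step I would flag, is the use of normality: without $gHg^{-1} = H$ the operator $P$ need not commute with $\rho(G)$, Schur's lemma does not apply, and the clean dichotomy collapses. Everything else is a routine averaging-and-reindexing argument together with Schur's lemma; in particular no representation-theoretic input beyond irreducibility over an algebraically closed field is needed, and we do not require $\rho$ to be unitary.
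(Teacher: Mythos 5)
Your proof is correct, but it takes a genuinely different route from the paper's. The paper settles the nontrivial case by Clifford's theorem: if $\rho$ does not restrict trivially to $H$, then $\rho|_H$ decomposes as a direct sum of $G$\-conjugates of a single irreducible representation of $H$, so no summand can be trivial (if one were, all would be), hence $\rho|_H$ contains no trivial subrepresentation and the average $\frac{1}{|H|}\sum_{h\in H}\rho(h)$ --- the projector onto the trivial isotypic component of $\rho|_H$ --- vanishes. You instead study that average $P$ directly: idempotency by reindexing, conjugation-invariance $\rho(g)P\rho(g)^{-1}=P$ by normality, and then Schur's lemma over $\C$ forces $P=\lambda\,\id_n$ with $\lambda\in\{0,1\}$, after which matching $\lambda=1$ to ``$H\subseteq\ker\rho$, so $\rho$ descends to $G/H$'' and $\lambda=0$ to ``does not factor'' is immediate (your chain $\rho(h)=\rho(h)P=P=\id_n$ is fine since $P=\id_n$ is the case hypothesis). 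Your approach is more elementary and self-contained, needing only Schur's lemma for complex irreducible representations rather than Clifford's theorem, and it makes the role of normality completely transparent: it is precisely what places $P$ in the commutant of the irreducible image. It also sits naturally alongside the paper's own Schur-lemma argument in Theorem~\ref{thm:algebraicrhonormalization}. What the paper's route buys is additional structural information about $\rho|_H$ (the conjugate-isotypic decomposition), which is not needed for the statement; your dichotomy argument delivers the same conclusion with less machinery.
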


\noindent\begin{proof}
Suppose that $\rho$ restricts to a trivial representation on $H$. Then $\rho(g) = \rho(g)\rho(h) = \rho(gh)$ for all $g \in G$ and $h\in H$, so $\rho$ is well-defined on cosets, and factors via the quotient group $G/H$. So property 1 holds, and property 2 is clearly false.

Otherwise, suppose that $\rho$ does not restrict to a trivial representation on $H$. So both parts of property 1 must be false. By Clifford's theorem~\cite{cst09-cta} the restricted representation $\rho_H$ is of the form
\begin{equation}
\rho_H(h) = \bigoplus _{i} \sigma \big( g_i ^{\vphantom{-1}} h g_i ^{-1} \big)
\end{equation}
where $\sigma$ is an irreducible representation of $H$, and where $g_i$ is some finitely-indexed family of elements of $G$, such that the summands are pairwise nonisomorphic representations. None of these summands can be the trivial representation, since then they would all be trivial, which would violate our hypothesis. Thus $\rho_H$ does not contain the trivial representation as a subrepresentation. Property 2 of the theorem follows, since $\sum_{h\in H} \rho(h)$ is proportional to a projector onto a trivial subrepresentation.

\end{proof}

\begin{theorem}
\label{thm:sumtheorem}
For a projection $G \sxto q G/H ^l$ onto the left coset space of a subgroup $H \subseteq G$, then for a representation $G \sxto \rho \GL(n)$ we have
\begin{equation}
\displaystyle \sum _{h \in H} \rho (h) = 0
\quad\Rightarrow\quad \displaystyle \sum _{g \in G} \rho(g) \otimes q(g) = 0
\,.
\end{equation}
\end{theorem}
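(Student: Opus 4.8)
The plan is to decompose the sum over $G$ into its left cosets modulo $H$ and apply the hypothesis cosetwise. Fix a transversal $g_1,\dots,g_k \in G$ for the left cosets of $H$, so that $G = \bigsqcup_{i=1}^{k} g_i H$ and every $g \in G$ is uniquely $g = g_i h$ for some $i$ and some $h \in H$. Note that the projection $q$ is by definition constant on each left coset, so $q(g_i h) = q(g_i)$ for all $h \in H$.

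First I would rewrite the target expression by summing over the transversal and then over $H$:
\[
\sum_{g \in G} \rho(g) \otimes q(g) \;=\; \sum_{i=1}^{k} \sum_{h \in H} \rho(g_i h) \otimes q(g_i h).
\]
Next I would use that $\rho$ is a group homomorphism, so $\rho(g_i h) = \rho(g_i)\rho(h)$, together with the constancy $q(g_i h) = q(g_i)$ just noted. Substituting and pulling the factors that do not depend on $h$ outside the inner sum yields
\[
\sum_{g \in G} \rho(g) \otimes q(g) \;=\; \sum_{i=1}^{k} \Big(\rho(g_i)\sum_{h \in H}\rho(h)\Big) \otimes q(g_i).
\]
By the hypothesis $\sum_{h\in H}\rho(h) = 0$ every bracket vanishes, so the whole expression is $0$, as required.

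Since every step is elementary, there is no genuine obstacle here; the only thing needing care is the handedness. It is precisely because $q$ projects onto the space of \emph{left} cosets that $q$ is invariant under right multiplication by $H$, which is what lets the sum $\sum_{h\in H}\rho(h)$ — a sum over the subgroup acting on the right — factor out. I would also note explicitly that the lemma uses neither irreducibility of $\rho$ nor normality of $H$; those hypotheses only enter when this statement is combined with Theorem~\ref{thm:algebraicnormalsubgroup} to establish Theorem~\ref{thm:graphicalnormalsubgroup}.
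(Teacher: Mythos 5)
Your proof is correct and is essentially the paper's own argument: both decompose $G$ via a coset transversal, use the homomorphism property of $\rho$ and the constancy of $q$ on cosets to factor out $\sum_{h\in H}\rho(h)$, and conclude from the hypothesis. The only (immaterial) difference is that you write elements as $g_i h$ so the $H$-sum factors out on the right, whereas the paper writes them as $h g_c$ and extracts it on the left.
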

\begin{proof}
For a subgroup $H \subseteq G$, we can choose a representative $g_c \in G$ for each element of the coset space $c \in G / H ^l$, such that every element $g \in G$ is uniquely of the form $h g_{c}$ for some $c \in G/H$ and $h \in H \subseteq G$, and such that $q(g_c) = c$. As a result, the sum of property~2 above can be rewritten as
\begin{align}
\nonumber
&\hspace{20pt}\sum _{h \in H} \sum _{c \in G/H} {\rho(h g_c)} \otimes {q(g _c)}
\\
&= \Bigg( \bigg( \sum _{h \in H} \rho(h) \bigg) \otimes \id _{G/H} \Bigg)
\circ \Bigg( \sum _{c \in G/H} \rho(g_c) \otimes c \Bigg)
\end{align}
This makes the implication clear.
\end{proof}

\subsection{Normalization of irreducible representations}

\begin{theorem}
\label{thm:algebraicrhonormalization}
For a finite group $G$ and an irreducible representation $G \sxto \rho \Mat(n)$,
\begin{calign}
\rho \circ \rho ^\dag = \frac{|G|}{n} \, \id _{\Mat(n)} \, .
\end{calign}
\end{theorem}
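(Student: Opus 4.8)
The plan is to unwind the adjoint $\rho^\dag$ in coordinates and reduce the identity to the Schur orthogonality relations for matrix coefficients of an irreducible representation.

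First I would fix the standard identifications: $\Mat(n) = \C^n \otimes \C^n$ carries the Hilbert--Schmidt inner product $\langle A, B \rangle = \Tr(A^\dag B)$, and the group algebra $\C[G]$ has the orthonormal basis $\{\ket g\}_{g \in G}$. From the defining adjunction, for $M \in \Mat(n)$ we have $\langle \ket g, \rho^\dag(M) \rangle = \langle \rho(\ket g), M \rangle = \Tr(\rho(g)^\dag M)$, so $\rho^\dag(M) = \sum_{g \in G} \Tr(\rho(g)^\dag M)\, \ket g$ and hence $(\rho \circ \rho^\dag)(M) = \sum_{g \in G} \Tr(\rho(g)^\dag M)\, \rho(g)$. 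Since $\rho$ is unitary we have $\rho(g)^\dag = \rho(g^{-1})$, and writing the above out entrywise gives $[(\rho \circ \rho^\dag)(M)]_{ij} = \sum_{g \in G} \sum_{k,l} \rho(g)_{ij}\, \overline{\rho(g)_{kl}}\, M_{kl}$.

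Next I would invoke the Schur orthogonality relations \cite{cst09-cta}: for an $n$-dimensional irreducible unitary representation of a finite group of order $|G|$, $\sum_{g \in G} \rho(g)_{ij}\, \overline{\rho(g)_{kl}} = \frac{|G|}{n}\, \delta_{ik}\, \delta_{jl}$. Substituting this collapses the double sum to $[(\rho \circ \rho^\dag)(M)]_{ij} = \frac{|G|}{n}\, M_{ij}$, which is exactly the assertion $\rho \circ \rho^\dag = \frac{|G|}{n}\, \id_{\Mat(n)}$.

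If a self-contained argument is wanted, the orthogonality relations can instead be re-derived on the spot. A short reindexing of $\sum_{g} \Tr(\rho(g)^\dag M)\, \rho(g)$, using $\rho(g)^\dag = \rho(g^{-1})$, shows that $A := \rho \circ \rho^\dag$ commutes with left multiplication and with right multiplication by every $\rho(h)$; thus $A$ is an intertwiner for the $G \times G$-action $M \mapsto \rho(g)\, M\, \rho(h)^{-1}$ on $\Mat(n)$, which is irreducible precisely because $\rho$ is irreducible, so Schur's lemma forces $A = c\, \id_{\Mat(n)}$. The scalar is then pinned down by the trace over $\Mat(n)$: $\Tr_{\Mat(n)}(A) = \sum_{g \in G} \Tr(\rho(g)^\dag \rho(g)) = |G|\, n$ while $\Tr_{\Mat(n)}(c\, \id) = c\, n^2$, so $c = |G|/n$. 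The only real content is this appeal to irreducibility (equivalently, to orthogonality); the rest is bookkeeping, and the single point requiring care is that $\rho$ be taken unitary, so that $\rho(g)^\dag = \rho(g^{-1})$ — the identity genuinely fails for a non-unitary realization of an irreducible representation.
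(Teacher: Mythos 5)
Your argument is correct, and it begins exactly as the paper does — with the explicit formula $\rho^\dag(M)=\sum_{g\in G}\Tr(\rho(g)^\dag M)\,\ket g$, hence $(\rho\circ\rho^\dag)(M)=\sum_{g}\Tr(\rho(g)^\dag M)\,\rho(g)$ — but you finish along a genuinely different route. The paper never analyzes $\rho\circ\rho^\dag$ directly: it first invokes surjectivity of an irreducible representation onto $\Mat(n)$ (a Burnside/density-type fact) to reduce the claim to $\rho\circ\rho^\dag\circ\rho=\frac{|G|}{n}\rho$ evaluated on group elements, factors $\rho(g)$ out of the sum, and then applies Schur's lemma on $\C^n$ to the single matrix $X=\sum_{h}\Tr(\rho(h^{-1}))\,\rho(h)$, fixing the scalar via the character normalization $\sum_h\Tr(\rho(h^{-1}))\Tr(\rho(h))=|G|$. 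You instead either cite the Schur orthogonality relations for matrix coefficients outright, or, in your self-contained variant, apply Schur's lemma one level up: $A=\rho\circ\rho^\dag$ intertwines the $G\times G$-action on $\Mat(n)\cong V\otimes \overline{V}$, which is irreducible because $\rho$ is, so $A=c\,\id_{\Mat(n)}$, with $c$ pinned by comparing the Hilbert--Schmidt traces $\Tr(A)=|G|\,n$ and $c\,n^2$. What your organization buys is that the surjectivity step and the reduction to $\rho\rho^\dag\rho$ disappear entirely; what it costs is the slightly heavier input that the external tensor square $V\otimes\overline{V}$ is $G\times G$-irreducible (or the orthogonality relations themselves, which are essentially equivalent to the statement being proved, so the citation route has little independent content). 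Your closing caveat about unitarity is well taken: the paper's formula $\langle\rho(g)|M\rangle_{\Mat(n)}=\Tr(\rho(g^{-1})\circ M)$ silently uses $\rho(g)^\dag=\rho(g^{-1})$, which is exactly the hypothesis you make explicit.
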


\noindent\begin{proof}
Since irreducible representations are surjective as functions on the group algebra, it suffices to show
\[
\rho \circ \rho ^\dag \circ \rho = \frac{|G|} {n} \rho \, ,
\]
and since the group elements form a basis for the group algebra, it is enough to show that for all $g \in G$
\[
\rho \circ \rho ^\dag \circ \rho (g) = \frac {|G|} {n} \rho(g) \, .
\]
From the definition
\[
\langle g | \rho ^\dag(M) \rangle _G = \langle \rho(g) | M\rangle  _{\Mat(n)} = \Tr(\rho(g ^{-1}) \circ M)
\]
we see that
\[
\rho ^\dag (M) = \sum _{h \in G} \Tr (\rho(h ^{-1}) \circ M)\,\ket h\,,
\]
which for $M=\rho(g)$ gives
\begin{align}
\nonumber
\rho \circ \rho ^\dag \circ \rho(g) &= \sum _{h \in G} \Tr(\rho(h ^{-1}) \circ \rho(g) ) \rho(h)
\\
\nonumber
&= \sum _{h \in G} \Tr(\rho((gh) ^{-1}) \circ \rho( g)) \rho(gh)
\\
\label{eq:rhorhodag}
&= \rho(g) \sum _{h \in G} \Tr(\rho(h ^{-1})) \rho(h)\,.
\end{align}
Consider the element $X := \sum _{h \in G} \Tr(\rho(h ^{-1})) \rho(h)$ of $\Mat(n)$. It is an intertwiner for the irreducible representation $\rho$, since
\begin{align}
\nonumber
X \circ \rho(g) &= \sum _{h \in G} \Tr(\rho(h ^{-1})) \rho(h) \circ \rho(g)
\\
\nonumber
&= \sum _{h \in G} \Tr(\rho(h ^{-1})) \rho(hg)
\\
\nonumber
&= \sum _{h' \in G} \Tr(\rho(gh' {}^{-1} g ^{-1})) \rho(g h' g ^{-1}g)
\quad
\text{\scriptsize (for $h \mapsto g h' g^{-1}$)}
\\
\nonumber
&= \rho(g) \sum _{h' \in H} \Tr(\rho(h' {}^{-1}))\rho(h')
\\
&= \rho(g) \circ X
\end{align}
Hence by Schur's lemma we have $X = k \cdot \id _{\C ^n}$ for some $k \in \C$. By the normalization of character functions
\begin{align}
\Tr (X) &= \sum _{h \in G} \Tr(\rho(h ^{-1})) \Tr(\rho(h)) = |G|,
\end{align}
and hence $k = \frac{|G|}{n}$. Equation~\eqref{eq:rhorhodag} then gives $\rho \circ \rho^\dag = \frac{|G|}{n} \id _{\Mat(n)}$ as required.
\end{proof}

\end{document}